\renewcommand\bf\bfseries
\addspace\printfield{pages}\addspace
\numberwithin{equation}{section}
\theoremstyle{plain}
\newtheorem{thm}{\protect\theoremname}
  \theoremstyle{plain}
  \newtheorem{lem}[thm]{\protect\lemmaname}
  \theoremstyle{plain}
  \theoremstyle{plain}
  \newtheorem{prop}[thm]{\protect\propositionname}
  \theoremstyle{remark}
  \newtheorem{assumption}{\protect\assumptionname}
  \theoremstyle{remark}
  \newtheorem{rem}{\protect\remarkname}
  \theoremstyle{definition}
  \newtheorem{defn}{\protect\definitionname}
  \theoremstyle{plain}
  \newtheorem{example}{\protect\examplename}
\newcommand{\dif}{\mathrm{d}}
\newcommand{\supp}{\mathrm{supp}}
\newcommand{\im}{\mathrm{im}\,}
\newcommand{\sgn}{\mathrm{sgn}}
\newcommand{\tr}{\operatorname{tr}}
\newcommand{\findex}{\mathrm{ind}\,}
\newcommand{\diag}{\mathrm{diag}\,}
\newcommand{\coker}{\mathrm{coker}\,}
\DeclarePairedDelimiter\norm{\lVert}{\rVert}%
\let\oldnorm\norm
\def\norm{\@ifstar{\oldnorm}{\oldnorm*}}
\title{The Bulk-Edge Correspondence for Disordered Chiral Chains}
  \providecommand{\assumptionname}{Assumption}
  \providecommand{\claimname}{Claim}
  \providecommand{\corollaryname}{Corollary}
  \providecommand{\definitionname}{Definition}
  \providecommand{\lemmaname}{Lemma}
  \providecommand{\propositionname}{Proposition}
  \providecommand{\remarkname}{Remark}
\providecommand{\theoremname}{Theorem}
\providecommand{\examplename}{Example}
\crefname{section}{Section}{Sections}
\crefname{figure}{Figure}{Figures}
\crefname{assumption}{Assumption}{Assumptions}
\newtheorem*{lem*}{\protect\lemmaname}
\begin{document}

\author{Gian Michele Graf and Jacob Shapiro\\
 \textit{\normalsize{}Theoretische Physik, ETH Z\"urich, 8093 Zürich,
Switzerland }\\
}
\maketitle
\begin{abstract}
We study one-dimensional insulators obeying a chiral symmetry in the
single-particle picture. The Fermi level is assumed to lie in
a mobility gap. Topological indices are defined for infinite (bulk)
or half-infinite (edge) systems, and it is shown that for a given
Hamiltonian with nearest neighbor hopping the two indices are equal. We also give a new formulation
of the index in terms of the Lyapunov exponents of
the zero energy Schr\"odinger equation, which illustrates the conditions
for a topological phase transition occurring in the mobility gap regime. 
\end{abstract}

\section{Introduction}

Topological materials come in classes differing by symmetry type
and by the dimension of the physical space. The classification table
\cite{Schnyder_Ryu_Furusaki_Ludwig_PhysRevB.78.195125,Schnyder_Ryu_Furusaki_Ludwig_1367-2630-12-6-065010,Kitaev2009,Heinzner2005}
associates an index group to each class or actually a concrete index \cite{PSB_2016, Koma_Katsura_2016arXiv161101928K}, the values of which separate topological phases of materials within
the same class. The fairly general model to be analyzed here obeys
\emph{chiral symmetry} (class AIII of the table) in \emph{dimension
one}, and exhibits moreover \emph{strong disorder}. The symmetry of
the Hamiltonian is matched by that of the state, which is at half-filling. The prototypical
model in the same class, yet lacking disorder, is the Su-Schrieffer-Heeger
model of polyacetylene \cite{SSH_1979}: This is an alternating chain
of sites or, in other words, a bipartite lattice, along which electrons
hop between sub-lattices, either to the right or to the left, but without
experiencing an on-site potential. As a result, the Hamiltonian $H$
and its opposite, $-H$, are unitarily conjugate. In particular the
energy zero is special, being the fixed point under the sign flip,
and it singles out half-filling. If that energy lies in a spectral
gap of $H$, the model exhibits topological properties which depend
on the (constant) ratio of the amplitudes for hopping in the two directions
(from a given sub-lattice). How much of this survives when the hopping changes randomly from
bond to bond? And what if the disorder is actually so strong as to
close the spectral gap about zero? At first sight, disorder seems
to induce localization throughout the spectrum, as it certainly is
the case for on-site randomness \cite{Kunz1980} which corresponds to the class A, and is topologically trivial.
The truth for class AIII however is that localization may fail, but need not, at
the one special energy, i.e. zero. This is enough to rescue the topological
features; in fact Hamiltonians may be loosely viewed as belonging
to a same topological phase as long as they can be deformed while
preserving localization (mobility gap)
at zero energy. Put differently: The closing of the mobility gap about
zero defines the phase boundaries.

More precisely, we will cast the crucial assumption of a mobility
gap in precise technical terms, which relate to known signatures of localization.
We then consider two quantities associated to the bulk and the edge of
the material respectively, and show that they are well-defined and
integer-valued, whence they serve as indices.
We show that they agree (\emph{bulk-edge correspondence}) and finally
that the index can be characterized in terms of the Lyapunov spectrum
of the time-independent Schr\"odinger equation.

The paper is organized as follows. We start in \cref{sec:Chiral-1D-Systems} by describing the mathematical
setting, defining chiral symmetry and its features,
including the bulk and edge invariants. We define the notion of a mobility gap and state the main result about bulk-edge correspondence in that context. We also reformulate the index in terms of Lyapunov exponents. \cref{sec:The-Spectral-Gap-Case} is an aside about the more restrictive case of a spectral gap and the resulting simplifications. For completeness the even more special, translation invariant case is addressed there, too. In \cref{sec:Generalized-States-of-Zero-Energy} we return to the general case by reformulating the edge index, so as to conclude the proof of the main result in \cref{sec:Proof-of-Duality}. \hyperref[sec:The Appendix]{The Appendix} contains a few technical lemmas, as well as a discussion of more general boundary conditions.

In concluding this section we comment on literature on related models formulated in the framework of stochastically translation invariant Hamiltonians \cite{Bellissard_1994JMP....35.5373B}. In \cite{Prodan_Song_2014} a similar model has been discussed in the strong disorder regime and its phases explored numerically; bulk-edge correspondence is shown in \cite{PSB_2016} for the case of the spectral gap. The appropriate bulk index was introduced in \cite{PRODAN20161150} and moreover shown to be well-defined and continuous w.r.t. the Hamiltonian in the case of a mobility gap. Finally we note that in \cite{Mudry_1998_PhysRevLett.81.862} the role of the Lyapunov exponents at zero energy is addressed, including that of a zero exponent in some model.

\section{\label{sec:Chiral-1D-Systems}The model and the results}
\begin{figure}[h]
\centering
\input{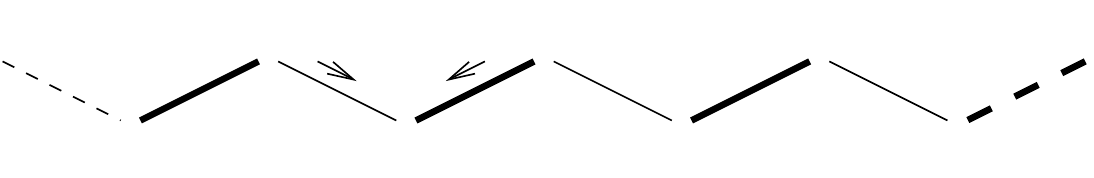_t}
\caption{\label{fig:Chiral zigzag Model}The lattice underlying the model is an alternating chain. The hopping amplitudes $A_n, B_n\in GL_N(\mathbb{C})$ are in direction of the arrows. In the opposite direction the adjoint matrices apply.}
\end{figure}

In this section we shall specify the setting of chiral one-dimensional systems, define the relevant indices, and formulate the main result on bulk-edge duality.

\subsection{One-dimensional chiral systems}
The lattice underlying the model is an alternating chain, where particles perform nearest-neighbor hopping (see \cref{fig:Chiral zigzag Model}). The single-particle Hilbert space of a tight-binding model is 
\[
\mathcal{H}=\mathcal{K}\otimes\mathbb{C}^{2}\ni\begin{pmatrix}\psi_{n}^{+}\\
\psi_{n}^{-}
\end{pmatrix}_{n\in\mathbb{Z}}\,,
\]
with $\mathcal{K}:=\ell^{2}(\mathbb{Z},\,\mathbb{C}^{N})$, where $\mathbb{C}^{N}$ stands for the internal degrees of freedom of each site and $\mathbb{C}^{2}$ for their grouping into dimers. The Hamiltonian is 
\begin{align}
H & = \begin{pmatrix}0 & S^\ast\\
S & 0
\end{pmatrix}
\label{eq:chiral Hamiltonian}
\end{align}
with $S$ acting on $\mathcal{K}$ as
\begin{align}
(S\psi^{+})_{n} & :=  A_{n}\psi_{n-1}^{+}+B_{n}\psi_{n}^{+}\,;\label{eq:action of S in our model}
\end{align}
hence 
\begin{align}
(S^{\ast}\psi^{-})_{n} & = A_{n+1}^{\ast}\psi_{n+1}^{-}+B_{n}^{\ast}\psi_{n}^{-}\,.\label{eq:action of S* in our model}
\end{align}
We assume $A_{n},\,B_{n}\in GL_N(\mathbb{C})$, whence solutions to $S\psi^+=0$ are determined by $\psi^+_n$ for any $n$. Otherwise,
i.e. if some matrices were singular, the corresponding bonds would
be effectively broken; put differently, the model would have edges
within.

The chiral symmetry 
\begin{align*}
\Pi & :=  \begin{pmatrix}\mathds{1} & 0\\
0 & -\mathds{1}
\end{pmatrix}
\end{align*}
is a symmetry of the Hamiltonian, in the sense that 
$$
\{H,\,\Pi\}  \equiv H \Pi + \Pi H = 0\,.
$$
It implies
\begin{align}
f(H)\Pi = \Pi f(-H)
\label{eq:If H anti-commutes with chirality so does its odd functional calculus}
\end{align}
for any Borel bounded function $f=f(\lambda)$.

The many-particle state is the Fermi sea at half-filling, meaning that the Fermi level is at $\lambda=0$. Its single-particle density matrix thus is the Fermi projection $P:=\chi_{(-\infty,0)}(H)$, where $\chi_I$ is the characteristic function of the set $I\subseteq\mathbb{R}$.

We assume localization for $H$ at the Fermi level and formulate that condition deterministically. In a companion paper \cite{Graf_Shapiro_2017} it will be shown that it holds either with probability zero or one, depending on the details of the model. This means that no further recourse to probabilistic arguments will be made in proofs, and that the indices are properties of the individual system, and not just of the statistical ensemble.
\begin{assumption}
\label{assu:exp_decay_of_Fermi_projection}For some $\mu,\,\nu>0$ we have
\begin{align*}
\sum_{n,\,n'\in\mathbb{Z}}\|P(n,n')\|(1+\left|n\right|)^{-\nu}e^{\mu\left|n-n'\right|} & \leq C < +  \infty\,,
\end{align*}
where $(\delta_n)_{n\in\mathbb{Z}}$ is the canonical (position) basis of $\ell^2(\mathbb{Z})$ and the map $P(n,n')=\langle\delta_n,\,P\delta_{n'}\rangle:\mathbb{C}^{2N}\to\mathbb{C}^{2N}$ acts between the internal spaces of dimers of $n'$ and $n$. Here, $\|\cdot\|$ is the trace norm of such maps. Moreover, the same bound applies to the Fermi projections of the edge Hamiltonians introduced below.
\end{assumption}

\begin{assumption}
\label{assu:zero_is_not_an_eigenvale_of_the_Hamiltonian}$\lambda=0$ is not an eigenvalue of $H$.
\end{assumption}

\begin{rem}
These two assumptions are trivially fulfilled in the spectral gap case. In this paper we are rather interested in the mobility gap regime, which is the typical one at large disorder.

\end{rem}

In physical terms \cref{assu:zero_is_not_an_eigenvale_of_the_Hamiltonian} states that every state is either a particle or a hole state, thus prompting the notation 
\begin{align*}
P_{-}:=P\,,\qquad P_{+}:=\chi_{(0,\infty)}(H) 
\end{align*}
and the rephrasing
\begin{align}
P_{-}+P_{+}=\mathds{1}\,,\qquad P_{-}\Pi=\Pi P_{+}\label{eq:rephrasing of the assumption of chiral symmetry}
\end{align}
of the assumption and of the chiral symmetry.

We will define shortly a \emph{bulk index} $\mathcal{N}$ associated to $H$,
as well as an \emph{edge index} $\mathcal{N}_{a}$ associated to its
truncation to the half-lattice to the left of an arbitrary point $a\in\mathbb{Z}$.
\begin{thm}
\label{thm:BEC}(Duality) Under  \cref{assu:zero_is_not_an_eigenvale_of_the_Hamiltonian,assu:exp_decay_of_Fermi_projection} we have
\begin{align*}
\mathcal{N} & =  \mathcal{N}_{a}\,.
%\label{eq:bulk-edge-duality}
\end{align*}
\end{thm}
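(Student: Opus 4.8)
The plan is to pass from $H$ to its flat-band version and thereby reduce the duality to an equality of two Fredholm indices. First I would set $Q:=\sgn(H)=P_{+}-P_{-}$, which by \cref{assu:zero_is_not_an_eigenvale_of_the_Hamiltonian} is a well-defined bounded operator with $Q^{2}=\mathds 1$. As $\sgn$ is odd, \cref{eq:If H anti-commutes with chirality so does its odd functional calculus} yields $Q\Pi=-\Pi Q$, so in the chiral grading $Q=\bigl(\begin{smallmatrix}0&U^{\ast}\\U&0\end{smallmatrix}\bigr)$ with $U$ a unitary on $\mathcal K$ (morally $U=\sgn(S)$). The crucial input of \cref{assu:exp_decay_of_Fermi_projection} is that its localization passes from $P$ to $P_{\pm}$, to $Q$, and hence to $U$, in the form of a polynomially weighted exponential off-diagonal bound; carrying this transfer out rigorously is the first technical point, since it is exactly what renders the operators below trace class or Fredholm.

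Next I would recast the bulk index as a winding number of $U$. Letting $\Lambda$ be the projection onto the half of $\mathbb Z$ cut out by $a$, the off-diagonal decay makes $[\Lambda,U]$ trace class, so $T_{U}:=\Lambda U\Lambda$ is Fredholm on $\Lambda\mathcal K$ and, whatever the precise normalization of $\mathcal N$, it should equal $\findex(T_{U})$ (equivalently the trace formula $\tr(U^{\ast}[\Lambda,U])$). This is the operator-theoretic Gohberg--Krein identification, and it is manifestly independent of $a$.

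The heart of the proof is to identify the edge index with this Fredholm index. I expect $\mathcal N_{a}$ to be the chirality-signed number of $\ell^{2}$ zero-energy states of the truncated Hamiltonian $\hat H_{a}$ on the half-lattice, i.e. $\findex(\hat S_{a})=\dim\ker\hat S_{a}-\dim\coker\hat S_{a}$, where $\hat S_{a}$ is the compression of $S$. Since $A_{n},B_{n}\in GL_{N}(\mathbb C)$, zero-energy solutions of $S\psi=0$ and $S^{\ast}\phi=0$ are propagated by the transfer matrices $-B_{n}^{-1}A_{n}$ (and their inverse adjoints for $S^{\ast}$), and the dimensions of the subspaces decaying toward $\mp\infty$ are governed by the signs of the Lyapunov exponents; this is the reformulation via generalized zero-energy states in \cref{sec:Generalized-States-of-Zero-Energy}, and is also the source of the Lyapunov characterization advertised in the abstract. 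I would then match $\ker(T_{U})$ and $\coker(T_{U})$ with precisely these one-sided decaying solution spaces — in the spectral-gap case this is the polar homotopy $S\rightsquigarrow U=S|S|^{-1}$, under which the Toeplitz index is invariant — so that $\findex(T_{U})=\findex(\hat S_{a})$ and hence $\mathcal N=\mathcal N_{a}$.

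The main obstacle is the mobility-gap regime itself. Because $0$ may lie in $\operatorname{spec}(H)$, $S$ need not be boundedly invertible, so the polar homotopy $S\rightsquigarrow U$ is not directly available and the edge modes cannot be read off from a naive eigenprojection; they must be extracted from generalized, merely polynomially bounded, zero-energy solutions. Showing that the relevant kernels and cokernels are nonetheless finite dimensional, and that the index is insensitive to the failure of invertibility of $S$, is exactly what the quantitative localization of \cref{assu:exp_decay_of_Fermi_projection} is designed to supply; this is why $\mathcal N_{a}$ is first recast through generalized zero-energy states in \cref{sec:Generalized-States-of-Zero-Energy} before the identification is completed in \cref{sec:Proof-of-Duality}.
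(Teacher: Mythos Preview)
Your plan is precisely the spectral-gap argument that the paper presents in \cref{sec:The-Spectral-Gap-Case}, and the paper is explicit that this route does \emph{not} extend to the mobility-gap regime. The obstruction is not merely a matter of care. In the mobility-gap case $0\in\sigma(|S|)$, so $S_t=U\bigl(t\mathds{1}+(1-t)|S|\bigr)$ fails to be Fredholm for $t<1$ and the polar homotopy $S\rightsquigarrow U$ collapses; moreover $S_a=\iota_a^\ast S\iota_a$ is itself not Fredholm (its image is not closed, as noted right after \cref{eq:Fredholm index of S_a}), so there is no Fredholm index $\findex(\hat S_a)$ to match with $\findex(T_U)$. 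You acknowledge this in your last paragraph, but the appeal to ``quantitative localization'' does not supply a concrete link between $\ker T_U$, $\coker T_U$ and the one-sided $\ell^2$ solution spaces of $S\psi=0$: no amount of off-diagonal decay for $U$ restores Fredholmness along the interpolation, and nothing in your outline explains how to compare the compression of $U$ with that of $S$ without it.

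The paper's actual proof abandons the Fredholm/Toeplitz picture entirely. It starts from the finite-rank identity $\tr(\Pi\Lambda_a)=0$ on $\mathcal H_a$, inserts $\mathds{1}=P_{0,a}+P_{+,a}+P_{-,a}$, and passes to the limit $a\to+\infty$. The zero-mode term $\tr(\Pi\Lambda_a P_{0,a})$ tends to $\mathcal N_a$ because the edge zero modes concentrate near $n=a$ (\cref{lem:finite window regularization}); the $P_{\pm,a}$ terms converge to the two summands of \cref{eq:reformulation of bulk invariant} via strong resolvent convergence $H_a\to H$ together with trace-norm convergence of the commutators $[\Lambda,\iota_a P_{\pm,a}\iota_a^\ast]$ (\cref{lem:Edge projection strongly converges to bulk projection}). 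The identity $\mathcal N=\mathcal N_a$ thus emerges only in the limit, not at any finite $a$ --- a feature the paper explicitly contrasts with the spectral-gap argument you are proposing.
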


We anticipate that $\mathcal{N}_{a}$ will be manifestly an integer.
Hence so is $\mathcal{N}$, and $\mathcal{N}_{a}$ is independent
of $a$. In the proof though, we will first establish the independence and then obtain the result by passing to the limit $a\to+\infty$. The two steps will be carried out in \cref{sec:Generalized-States-of-Zero-Energy,sec:Proof-of-Duality}.

\paragraph{Bulk Index.}
Let $\Sigma := \sgn\, H$. Let $\Lambda:\mathbb{Z}\to\mathbb{R}$ be a \emph{switch function}, i.e.
$\Lambda\left(n\right)=1$ (resp. $=0$) for $n$ (resp. $-n$) large
and positive. It defines a (multiplication) operator on $\ell^{2}(\mathbb{Z})$,
which carries naturally to its descendant spaces $\mathcal{K}$ and
$\mathcal{H}$.
\begin{defn}The bulk index is
\begin{align}
\mathcal{N} & :=  \frac{1}{2}\tr(\Pi \Sigma[\Lambda,\,\Sigma])\label{eq:Bulk topological invariant}\,.
\end{align}
\end{defn}

The index is well-defined. In fact, $\Sigma=P_{+}-P_{-}$ with $P_{\pm}$ as above and we have
\begin{lem}
\label{lem:=commutator of L and P is trace class}$[\Lambda,\,P_\pm]$
are trace class and the index can be expressed as \begin{align}\label{eq:reformulation of bulk invariant}\mathcal{N} &= -\tr\Pi P_{+} [\Lambda,P_-] - \tr\Pi P_- [\Lambda,P_{+}]\,.\end{align}
\end{lem}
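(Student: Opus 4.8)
The plan is to establish the two assertions separately: first that $[\Lambda,\,P_\pm]$ are trace class (which simultaneously justifies that $\mathcal{N}$ in \eqref{eq:Bulk topological invariant} is well-defined, since then $[\Lambda,\Sigma]$ is trace class and $\Pi\Sigma[\Lambda,\Sigma]$ has a finite trace), and then the reformulation \eqref{eq:reformulation of bulk invariant}, which turns out to be purely algebraic once trace-class-ness is in hand. The substance therefore lies entirely in the first part. Throughout I would use $P_++P_-=\mathds{1}$ from \eqref{eq:rephrasing of the assumption of chiral symmetry} (valid thanks to \cref{assu:zero_is_not_an_eigenvale_of_the_Hamiltonian}), which gives $[\Lambda,\,P_+]=-[\Lambda,\,P_-]$; hence it suffices to treat $[\Lambda,\,P_-]$, the projection covered directly by \cref{assu:exp_decay_of_Fermi_projection}.

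For the trace-class estimate I would work at the level of matrix elements. Since $\Lambda$ is a multiplication operator in the position variable, its block kernel is
\[
[\Lambda,\,P_-](n,n') = \left(\Lambda(n)-\Lambda(n')\right)P_-(n,n')\,,
\]
so by subadditivity of the trace norm over the block decomposition,
\[
\norm{[\Lambda,\,P_-]}_1 \;\le\; \sum_{n,\,n'\in\mathbb{Z}}\abs{\Lambda(n)-\Lambda(n')}\,\norm{P_-(n,n')}\,.
\]
The key point, and the main obstacle, is to reconcile the finite differences of $\Lambda$ with the precise shape of the weight in \cref{assu:exp_decay_of_Fermi_projection}. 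I would prove the pointwise bound $\abs{\Lambda(n)-\Lambda(n')}\le C\,(1+\abs{n})^{-\nu}e^{\mu\abs{n-n'}}$ for a suitable constant, by distinguishing whether $n$ lies in the finite transition region of the switch function or not: if it does, the factor $(1+\abs{n})^{-\nu}$ is bounded below, while if $\abs{n}$ is large the difference vanishes unless $n$ and $n'$ are separated by the transition, which forces $\abs{n-n'}$ to grow with $\abs{n}$ so that the exponential $e^{\mu\abs{n-n'}}$ dominates the polynomial. Inserting this bound makes the right-hand side above termwise no larger than the weighted sum in \cref{assu:exp_decay_of_Fermi_projection}, which is finite by hypothesis. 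This establishes that $[\Lambda,\,P_-]$, and with it $[\Lambda,\,P_+]=-[\Lambda,\,P_-]$ and $[\Lambda,\,\Sigma]=2[\Lambda,\,P_+]$, is trace class.

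For the identity I would then simply expand, all four resulting operators being trace class by the above. Substituting $\Sigma=P_+-P_-$ and $[\Lambda,\,\Sigma]=[\Lambda,\,P_+]-[\Lambda,\,P_-]$ into \eqref{eq:Bulk topological invariant} gives
\[
\mathcal{N} = \tfrac12\left(\tr\Pi P_+[\Lambda,\,P_+] - \tr\Pi P_+[\Lambda,\,P_-] - \tr\Pi P_-[\Lambda,\,P_+] + \tr\Pi P_-[\Lambda,\,P_-]\right)\,.
\]
I would then eliminate the two diagonal terms using $[\Lambda,\,P_+]=-[\Lambda,\,P_-]$, which converts $\tr\Pi P_+[\Lambda,\,P_+]$ into $-\tr\Pi P_+[\Lambda,\,P_-]$ and $\tr\Pi P_-[\Lambda,\,P_-]$ into $-\tr\Pi P_-[\Lambda,\,P_+]$; collecting the terms yields exactly \eqref{eq:reformulation of bulk invariant}. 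It is worth noting that this reformulation uses no property of $\Pi$ beyond boundedness and linearity, so chiral symmetry plays no role here; its effect will instead be felt later, in the integrality of the index.
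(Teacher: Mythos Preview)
Your proof is correct. The trace-class argument follows the paper's approach essentially verbatim: write the kernel of $[\Lambda,P_-]$, bound the trace norm by the sum of block norms (the paper isolates this as a separate lemma, \cref{eq:bound on trace class norm}), and then establish the pointwise estimate $|\Lambda(n)-\Lambda(n')|\le C(1+|n|)^{-\nu}e^{\mu|n-n'|}$ by the same case distinction you describe.

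For the reformulation, your route is slightly different and in fact cleaner. You use only $P_++P_-=\mathds{1}$, hence $[\Lambda,P_+]=-[\Lambda,P_-]$, to convert the diagonal terms into the off-diagonal ones. The paper instead invokes the chiral relation $\Pi P_+=P_-\Pi$ together with cyclicity of the trace to sandwich each commutator as $\tr\Pi P_+[\Lambda,P_\pm]=\tr\Pi P_+[\Lambda,P_\pm]P_-=\mp\tr\Pi P_+\Lambda P_-$, and only then collects terms. Your observation that chiral symmetry is not needed for \cref{eq:reformulation of bulk invariant} is correct and worth keeping; the paper's argument obscures this point by bringing in \cref{eq:rephrasing of the assumption of chiral symmetry} unnecessarily.
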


\paragraph{Edge Index.}

The model is truncated to
$\mathbb{Z}_{a}:=(-\infty,a]\subset\mathbb{Z}$ with Hilbert space
$\mathcal{H}_{a}:=\ell^{2}(\mathbb{Z}_{a},\,\mathbb{C}^{2N})$. Of
course the choice of $a$ ought not be of physical relevance. Here we
keep this choice free and explicit in the notation since we shall
eventually take the limit $a\to+\infty$ which helps associating edge
objects with bulk ones.

The truncation procedure can be recast algebraically as follows. Let $\iota_{a}:\ell^{2}(\mathbb{Z}_{a})\hookrightarrow\ell^{2}(\mathbb{Z})$ be the natural injection, whence $\iota_{a}^{\ast}:\ell^{2}(\mathbb{Z})\twoheadrightarrow\ell^{2}(\mathbb{Z}_{a})$ is the restriction operator. Thus $\iota_{a}$ is an isometry, but not a unitary: In fact 
\begin{align}
\iota_{a}^{\ast}\iota_{a}  =  \mathds{1}_{\mathcal{H}_{a}}\label{eq:partial isometry condition}&\,,\qquad\iota_{a}\iota_{a}^{\ast}=\chi_a\,,
\end{align}
where $\chi_a$ is the projection $\chi_a:\ell^{2}(\mathbb{Z})\to\ell^{2}(\mathbb{Z})$ associated to the subspace $\ell^2(\mathbb{Z}_{a})\subseteq\ell^2(\mathbb{Z})$. Thus 
\begin{align*}
\chi_a\iota_a = \iota_a\,&,\qquad\iota_a^\ast=\iota_a^\ast\chi_a
%\label{eq:relations with iota and chi}
\end{align*}
and
\begin{align}
\chi_a A \chi_a:\im \chi_a \to \im \chi_a = \iota_a^\ast A \iota_a 
\label{eq:relations with iota and chi_a}
\end{align}
for any operator $A$ acting on $\ell^{2}(\mathbb{Z})$ or on some of its descendant spaces. In particular, letting
\begin{align}
S_a := \iota_a^\ast S \iota_a\,,
\label{eq:relation between S_a and S}
\end{align}
we then have 
\begin{align}
H_{a}  &:= \iota_a^\ast H\iota_a =  \begin{pmatrix}0 & S_{a}^\ast\\
S_{a} & 0
\end{pmatrix}\,.\label{eq:chiral edge Hamiltonian}
\end{align}
More general boundary conditions will be discussed in \cref{subsec:General B.C.}.

\begin{rem}
As $a\to+\infty$ an ever larger portion $\mathbb{Z}_a$ of $\mathbb{Z}$ is retained, resulting in the limit $H_a\to H$ in the strong resolvent sense, as will be seen and used.
\end{rem}

From \cref{eq:chiral edge Hamiltonian} we still have 
\begin{align}
\{H_{a},\,\Pi\} & =  0\label{eq:chiral symmetry constraint for edge Hamiltonian}
\end{align}
and its consequence \cref{eq:If H anti-commutes with chirality so does its odd functional calculus}.

Let $P_{0,a}:=\chi_{\{0\}}(H_{a})$ be the spectral projection
for $\lambda=0$, i.e. its eigenprojection if it is an eigenvalue. We note that \cref{assu:zero_is_not_an_eigenvale_of_the_Hamiltonian} generically fails for the edge system.
\begin{defn}
The edge index is
\begin{align}
\mathcal{N}_{a} & :=  \tr(\Pi P_{0,a})\,.\label{eq:edge invariant}
\end{align}
\end{defn}

\begin{rem}
$\Pi$ maps $\im P_{0,a}=\ker H_{a}$ into itself. Indeed, $H_{a}\psi=0$
implies $H_{a}\Pi\psi=0$ by \cref{eq:chiral symmetry constraint for edge Hamiltonian}. In particular $\mathcal{N}_{a}\in\mathbb{Z}$
as anticipated, and the index may be written as 
\begin{align}
\mathcal{N}_{a} & =  \dim\ker S_{a}-\dim\ker S_{a}^{\ast}\label{eq:Fredholm index of S_a}\,,
\end{align}
which is finite by \cref{eq:action of S in our model,eq:action of S* in our model}. Despite appearances, this is not a Fredholm index in general, simply because $S_a$ is not Fredholm in the mobility gap regime of \cref{assu:exp_decay_of_Fermi_projection}. Indeed, $\im S_a$ is not closed then.
\end{rem}

\begin{example}
\cref{fig:Chiral zigzag Model} should be viewed as just one example of a lattice leading to a chiral Hamiltonian \cref{eq:chiral Hamiltonian}. Other lattices may do so too. An example is shown in \cref{fig:Example for another chiral structure}. 
\begin{figure}[b]

\begin{center}
\begin{tikzpicture}[scale=1, transform shape]
% Helper grid (only when editing)
%\draw[step=1cm,gray,very thin] (0,2) grid (11,6); 

% Define the nodes (the sites of the lattice) 
\node at (1,3) (phim2m) {$\varphi_{m-2}^-$};
\node at (1,5) (phim2p) {$\varphi_{m-2}^+$};

\node at (4,3) (phim1m) {$\varphi_{m-1}^-$};
\node at (4,5) (phim1p) {$\varphi_{m-1}^+$};

\node at (7,3) (phim) {$\varphi_{m}^-$};
\node at (7,5) (phip) {$\varphi_{m}^+$};

\node at (10,3) (phip1m) {$\varphi_{m+1}^-$};
\node at (10,5) (phip1p) {$\varphi_{m+1}^+$};

% Vertical lines
\draw[ thick]  (phim2m) -- (phim2p);
\path[->, thick] (phim1p) edge node [right] {\tiny$ T_{m-1}$} (phim1m);
\path[->, thick] (phip) edge node [right] {\tiny$T_m$} (phim);
\draw[thick]  (phip1m) -- (phip1p);
% Diagonal lines
\draw[thick] (phim2p) -- (phim1m);
\path[->,thick] (phim1p) edge node [above=0.1cm] {\tiny$T_{m-1}^-$} (phim2m);
\path[->,thick] (phim1p) edge node [above=0.3cm] {\tiny$T_{m-1^+}$} (phim);
\path[->,thick] (phip) edge node [right=0.1cm] {\tiny$T_m^-$}  (phim1m);
\path[->,thick] (phip) edge node [left=0.1cm] {\tiny$T_m^+$}(phip1m);
\draw[thick] (phim) -- (phip1p);
% dashed lines to infinity
\draw[dashed,thick] (phim2p) -- (-0.5,4);
\draw[dashed,thick] (phim2m) -- (-0.5,4);
\draw[dashed,thick] (phip1p) -- (11.5,4);
\draw[dashed,thick] (phip1m) -- (11.5,4);
% dashed dimers
\draw[thick, dashed, rounded corners=10pt]
  (3.4,2.6) rectangle (7.6,3.4);
  \draw[thick, dashed, rounded corners=10pt]
  (6.4,4.6) rectangle (10.6,5.4);
\end{tikzpicture}
\end{center}

\caption{\label{fig:Example for another chiral structure}A lattice with hopping amplitudes $T_m,T_m^{\pm}\in GL_M(\mathbb{C})$ in direction of the arrows. The blobs indicate a regrouping of the sites relating the lattice to that of \cref{fig:Chiral zigzag Model}.}
\end{figure}
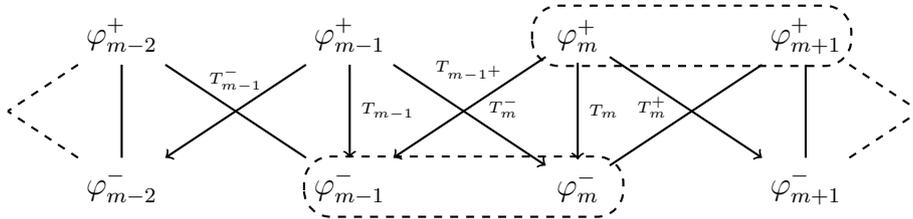

The model is of the form \cref{eq:chiral Hamiltonian,eq:action of S in our model} with $N=2M$ upon grouping amplitudes as bispinors: 
$$ \psi_n^-=\begin{pmatrix} \varphi_{2n-1}^-\\
 \varphi_{2n}^-
\end{pmatrix},\quad\psi_n^+=\begin{pmatrix} \varphi_{2n}^+\\
\varphi_{2n+1}^+
\end{pmatrix}\,.$$
Comparing \cref{fig:Chiral zigzag Model,fig:Example for another chiral structure} yields $$ A_n = \begin{pmatrix}T_{2n-2}^+ & T_{2n-1}\\ 0 & T_{2n-1}^+\end{pmatrix}\,,\quad B_n = \begin{pmatrix}T_{2n}^- & 0 \\ T_{2n}&T_{2n+1}^-\end{pmatrix}\,.$$
In particular $A_n, B_n \in GL_N(\mathbb{C})$ iff $T_m^\pm\in GL_M(\mathbb{C})$.
\end{example}

\subsection{The zero-energy Lyapunov spectrum\label{sec:Formulation-Via-Lyapunov-Exponents}}
We conclude this section with an alternate formulation of the index. To this end we consider the equation $S\psi^+=0$ as a finite difference equation for sequences $\psi^+:\mathbb{Z}\to\mathbb{C}^N$, foregoing normalizability. By \cref{eq:action of S in our model} and $A_n\in GL_N(\mathbb{C})$ the equation is solved recursively,
\begin{align}
S\psi^+=0\qquad\Longleftrightarrow\qquad\psi_{n-1}^+=T_n\psi_n^+\,,\qquad(n\in\mathbb{Z})
\label{eq:finite difference equation formally extended to non normalizable vectors}
\end{align}
with $T_n:=-A_n^{-1}B_n$. The associated transfer matrix is
\begin{align*}
T(n):=T_{n-1}\cdots T_0\,,\qquad(n<0)\,.
\end{align*}
The Lyapunov exponent of a vector $v\in\mathbb{C}^N$ is then given as
\begin{align}
\chi(v):=\limsup_{n\to-\infty}\frac{1}{|n|}\log\|T(n)v\|
\label{eq:the Lyapunov exponent}
\end{align}
with $\chi(v)\in\bar{\mathbb{R}}\equiv\mathbb{R}\cup\{\pm\infty\}$ and $\chi(0)=-\infty$. The set 
\begin{align}
V_\chi &:=\{v\in\mathbb{C}^N | \chi(v)\leq\chi\}
\label{eq:filtration of C^N by the Lyapunov spectrum}
\end{align}
is a linear subspace which is non-decreasing in $\chi\in\bar{\mathbb{R}}$. Let $\chi_N\leq\cdots\leq\chi_1$ be the values of $\chi$ at which $\chi\mapsto\dim V_\chi$ jumps, listed repeatedly according to the jump in dimension.
\begin{assumption}
Let $0$ not be in the Lyapunov spectrum, i.e. $\chi_i\neq0$, ($i=1,\dots,N$).
\label{assu:zero is not in the Lyapunov spectrum}
\end{assumption}
\begin{thm} \label{thm:reformulation of edge index in terms of the Lyapunov spectrum}Under \cref{assu:zero is not in the Lyapunov spectrum} the edge index equals the number of negative Lyapunov exponents: 
\begin{align}
\mathcal{N}_a=\#\{i|\chi_i<0\}
\label{eq:reformulation of edge index using Lyapunov spectrum}
\end{align}
for any $a\in\mathbb{Z}$.
\end{thm}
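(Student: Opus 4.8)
The plan is to evaluate the edge index through its expression \cref{eq:Fredholm index of S_a}, namely $\mathcal{N}_a = \dim\ker S_a - \dim\ker S_a^\ast$, and to identify each of the two kernels with a space of solutions of the zero-energy recursion \cref{eq:finite difference equation formally extended to non normalizable vectors}, selected by square-summability on the half-line $\mathbb{Z}_a=(-\infty,a]$.

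First I would unfold the action of $S_a$ and $S_a^\ast$ on $\ell^2(\mathbb{Z}_a,\mathbb{C}^N)$ using \cref{eq:relation between S_a and S} together with \cref{eq:action of S in our model,eq:action of S* in our model}, paying attention to the boundary site $n=a$. For $S_a$ one finds $(S_a\psi^+)_n = A_n\psi^+_{n-1}+B_n\psi^+_n$ for every $n\leq a$, so that $S_a\psi^+=0$ is exactly the recursion $\psi^+_{n-1}=T_n\psi^+_n$ ($n\leq a$) with $\psi^+_a\in\mathbb{C}^N$ unconstrained: the edge imposes no condition. For $S_a^\ast$, by contrast, the amplitude $\psi^-_{a+1}$ is truncated away, so the top equation degenerates to $B_a^\ast\psi^-_a=0$; invertibility of $B_a$ forces $\psi^-_a=0$, and the downward recursion $\psi^-_n=-(B_n^\ast)^{-1}A_{n+1}^\ast\psi^-_{n+1}$ then propagates this to $\psi^-\equiv 0$. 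Hence $\dim\ker S_a^\ast=0$ for every $a$.

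It then remains to compute $\dim\ker S_a$. Extending any solution of the recursion uniquely to all of $\mathbb{Z}$ via the invertible $T_n$ and parametrizing it by $v:=\psi^+_0$, the behaviour of $\|\psi^+_n\|$ as $n\to-\infty$ is governed by $\|T(n)v\|$ (up to a fixed index shift immaterial for the asymptotic rate), whose exponential rate is precisely $\chi(v)$ of \cref{eq:the Lyapunov exponent}. The key step is the equivalence $\psi^+\in\ell^2(\mathbb{Z}_a)\Leftrightarrow\chi(v)<0$: if $\chi(v)<0$, the $\limsup$ being negative yields a genuine bound $\|\psi^+_n\|\leq Ce^{(\chi(v)+\varepsilon)|n|}$ on the whole tail, hence square-summability; if $\chi(v)>0$, a subsequence along which $\|T(n)v\|$ grows exponentially already destroys boundedness. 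The finite stretch between $0$ and $a$ never affects summability, so this characterization is independent of $a$, and \cref{assu:zero is not in the Lyapunov spectrum} is exactly what excludes the borderline rate $\chi(v)=0$. Thus $\psi^+\mapsto\psi^+_0$ carries $\ker S_a$ isomorphically onto $V_{<0}:=\bigcup_{\chi<0}V_\chi$ from \cref{eq:filtration of C^N by the Lyapunov spectrum}, and by the definition of the Lyapunov spectrum $\dim V_{<0}=\#\{i\mid\chi_i<0\}$.

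Combining the two computations yields $\mathcal{N}_a=\#\{i\mid\chi_i<0\}$ uniformly in $a$, as claimed. I expect the main obstacle to be the middle step: converting the $\limsup$ in \cref{eq:the Lyapunov exponent} into two-sided exponential control of genuine solutions (decay when $\chi(v)<0$, unboundedness when $\chi(v)>0$), with the marginal case disposed of precisely by \cref{assu:zero is not in the Lyapunov spectrum}. The correct accounting of the free versus constrained boundary behaviour at the site $a$, for $S_a$ against $S_a^\ast$, is the other point requiring care; the remainder is bookkeeping of the filtration $V_\chi$.
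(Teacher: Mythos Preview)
Your proposal is correct and follows essentially the same route as the paper: the identification $\ker S_a^\ast=\{0\}$ and $\ker S_a\cong V$ (the $\ell^2$-at-$-\infty$ solutions of $S\psi^+=0$) is exactly the content of \cref{lem:reformulation of edge index via non-normalizable bulk states}, and your final step matches the paper's comparison of $V$ with the filtration $V_\chi$. The only cosmetic difference is that the paper replaces your two-case analysis ($\chi(v)<0\Rightarrow\ell^2$, $\chi(v)>0\Rightarrow$ unbounded) by the sandwich $V_\chi\subseteq V\subseteq V_0$ for any $\chi<0$ together with $\dim V_\chi=\dim V_0$ under \cref{assu:zero is not in the Lyapunov spectrum}, which yields the equality of dimensions without invoking growth along a subsequence.
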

The proof will be given in \cref{sec:Generalized-States-of-Zero-Energy}. In \cite{Graf_Shapiro_2017} we will give conditions such that in \cref{eq:the Lyapunov exponent} $\limsup$ can be replaced almost surely by $\lim$ for all $v\in\mathbb{C}^N$, actually with the limits being finite for $v\neq0$ and with simple Lyapunov spectrum. Moreover, $V_\chi$ is the spectral subspace of the self-adjoint matrix $\Lambda := \lim_{n\to\infty}(T(n)^\ast T(n))^{1/2n}$ and of eigenvalues $\leq e^\chi$. Finally, \cref{assu:exp_decay_of_Fermi_projection,assu:zero is not in the Lyapunov spectrum} will be shown to be equivalent.
\begin{rem}
As a complement to \cref{eq:reformulation of edge index using Lyapunov spectrum}, the edge index $\mathcal{N}_a$ may also be expressed in terms of the equation $S^*\psi^-=0$, in which case it is given by the number of positive Lyapunov exponents. In fact, introducing $\varphi_n^-=B_n^\ast\psi_n^-$ that equation is $\varphi_{n-1}^-={\tilde T}_n\varphi_n^-$, where $\tilde{T}_n=T^\circ_n$ and $M^\circ=(M^\ast)^{-1}$ (using $B_n\in GL_N(\mathbb{C})$, too). Its spaces are 
\begin{align}
\tilde{V}_\chi = V_{-\chi}^\perp\,,
\label{eq:Oseledet spaces of the circ sequence}
\end{align}
provided $\chi$ is not in the Lyapunov spectrum. In particular, $\tilde{\chi}_i=-\chi_{N+1-i}$. Eq. \cref{eq:Oseledet spaces of the circ sequence} follows from $\tilde{T}(n)=T(n)^\circ$ and $\tilde{\Lambda}=\Lambda^{-1}$.
\end{rem}
\begin{rem}
The usual scenario of a phase transition is that of a spectral gap closing on the Fermi level. \cref{thm:reformulation of edge index in terms of the Lyapunov spectrum} gives a different scenario, whereby the Fermi level may not lie in a gap throughout the transition. More precisely, the Lyapunov spectrum associated to $(H-\lambda)\psi=0$ consists of $2N$ exponents $\{\gamma_i\}_{i=1}^{2N}$ and is even under sign flip $\gamma\mapsto-\gamma$ (counting multiplicity). For $\lambda\neq0$ the spectrum is moreover simple, implying that $0$ is not an exponent and thus localization. For $\lambda=0$ however the exponents are those of $S\psi^+=0$ and their flips, $\{\chi_i,-\chi_i\}_{i=1}^N$. In particular $0$ may, but need not be an exponent. If it isn't, \cref{thm:reformulation of edge index in terms of the Lyapunov spectrum} applies, but if it becomes one, the localization length diverges at $\lambda=0$, signaling the topological phase transition.
\end{rem}

\section{\label{sec:The-Spectral-Gap-Case}The spectral gap case}
In the case of a spectral gap the analysis simplifies, as was noted in \cite{PSB_2016} and discussed in terms of K-theory. We present here an equivalent simplification as a contrast to the general case, to be proven later. Until the end of this section we forgo definition \cref{eq:action of S in our model} and allow $S$ to be any operator $S:\mathcal{K}\to\mathcal{K}$ for which $[\Lambda, S]$ is trace class; this being a generalization since the commutator is of finite rank in the former case. The spectral gap condition means that \cref{assu:exp_decay_of_Fermi_projection,assu:zero_is_not_an_eigenvale_of_the_Hamiltonian} are now replaced by the stronger condition  \begin{align} 0&\notin \sigma(H)\,.\label{eq:Spectral Gap Condition}\end{align}
Thus $H$ is Fredholm, and so is $S$ in view of \begin{align*} \ker H &= \ker S \oplus \ker S^\ast\,,\qquad \im H = \im S^\ast \oplus \im S\,.
\end{align*}
We first discuss the bulk index:
\begin{lem} The index \cref{eq:Bulk topological invariant} is well-defined and equals \begin{align}\mathcal{N} = \tr U^\ast [ \Lambda,U]\,,\label{eq:Reformulation of bulk index for spectral gap BEC proof}\end{align} where $U$ is the (unique) unitary in the polar decomposition of $S$: $S=U|S|$ with $|S|\equiv(S^\ast S)^{1/2}$.
\end{lem}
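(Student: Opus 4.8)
The plan is to reduce the statement to block-matrix algebra via the polar decomposition of $S$, with the only genuinely analytic ingredient being a functional-calculus estimate that propagates the trace-class property of $[\Lambda,H]$ to $\Sigma=\sgn H$. First I would record that the spectral gap condition \cref{eq:Spectral Gap Condition} forces $S$ to be invertible: since $0\notin\sigma(H)$ gives $\ker H=\{0\}$, the decomposition $\ker H=\ker S\oplus\ker S^\ast$ yields $\ker S=\ker S^\ast=\{0\}$, and as $S$ is Fredholm it is then boundedly invertible. Hence $|S|=(S^\ast S)^{1/2}$ is positive and invertible, $U:=S|S|^{-1}$ is unitary, and from $|H|=\diag(|S|,|S^\ast|)$ together with $|S^\ast|=U|S|U^\ast$, so that $S^\ast|S^\ast|^{-1}=U^\ast$, one computes $\Sigma=\sgn H=H|H|^{-1}=\begin{pmatrix}0 & U^\ast\\ U & 0\end{pmatrix}$.

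For well-definedness I would argue as follows. Since $[\Lambda,S]$ is trace class by the standing assumption of this section and $[\Lambda,S^\ast]=-[\Lambda,S]^\ast$, the commutator $[\Lambda,H]$ is trace class. Because $0\notin\sigma(H)$, the function $\sgn$ is holomorphic on a neighbourhood of $\sigma(H)$, so $\Sigma=\frac{1}{2\pi i}\oint_\Gamma\sgn(z)\,(z-H)^{-1}\,\dif z$ for a contour $\Gamma$ encircling $\sigma(H)$ and avoiding $0$. Using $[\Lambda,(z-H)^{-1}]=(z-H)^{-1}[\Lambda,H](z-H)^{-1}$, the integrand is trace class with trace norm bounded uniformly on $\Gamma$, whence $[\Lambda,\Sigma]$ is trace class and $\mathcal{N}$ is well-defined. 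The block form of $\Sigma$ shows moreover that $[\Lambda,U]$ and $[\Lambda,U^\ast]$ are trace class, which I will need below.

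It then remains to evaluate the trace. With $\Pi\Sigma=\begin{pmatrix}0 & U^\ast\\ -U & 0\end{pmatrix}$ and $[\Lambda,\Sigma]=\begin{pmatrix}0 & [\Lambda,U^\ast]\\ [\Lambda,U] & 0\end{pmatrix}$, multiplying gives $\Pi\Sigma[\Lambda,\Sigma]=\diag\bigl(U^\ast[\Lambda,U],\,-U[\Lambda,U^\ast]\bigr)$, so $\tr(\Pi\Sigma[\Lambda,\Sigma])=\tr(U^\ast[\Lambda,U])-\tr(U[\Lambda,U^\ast])$. The key is the operator identity $[\Lambda,U]U^\ast=\Lambda-U\Lambda U^\ast=-U[\Lambda,U^\ast]$, which holds between trace-class operators even though $\Lambda$ itself is not trace class; combined with cyclicity $\tr(U^\ast[\Lambda,U])=\tr([\Lambda,U]U^\ast)$ it yields $\tr(U[\Lambda,U^\ast])=-\tr(U^\ast[\Lambda,U])$. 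Therefore $\tr(\Pi\Sigma[\Lambda,\Sigma])=2\tr(U^\ast[\Lambda,U])$ and $\mathcal{N}=\tr(U^\ast[\Lambda,U])$, as claimed.

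I expect the contour-integral step establishing that $[\Lambda,\sgn H]$ is trace class to be the one requiring care: it is precisely the spectral gap that keeps $\Gamma$ bounded away from $0$ and the resolvents uniformly bounded, so that the trace-class property of $[\Lambda,H]$ survives the functional calculus. Everything downstream is block bookkeeping and a single use of cyclicity; in particular one must resist splitting $\Lambda-U\Lambda U^\ast$ into non-trace-class pieces and instead keep the combination intact.
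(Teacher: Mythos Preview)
Your proof is correct and follows the same block-matrix computation as the paper; the only genuine difference is in how you establish the trace-class property. The paper proves directly that $[\Lambda,U]$ is trace class via a separate lemma: using the integral representation $A^{-1/2}=C\int_0^\infty\lambda^{-1/2}(A+\lambda)^{-1}\,\dif\lambda$ applied to $A=S^\ast S\geq\varepsilon>0$, it shows that the trace-class commutator property passes from $S$ to $|S|^{-1}$ and hence to $U=S|S|^{-1}$. You instead work at the level of $H$ and use the holomorphic functional calculus for $\sgn$ across the spectral gap to obtain $[\Lambda,\Sigma]$ trace class, then read off $[\Lambda,U]$ from the block decomposition. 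Both routes exploit the gap in the same essential way (bounded inverse of $|H|$, respectively uniform resolvent bounds on a contour avoiding $0$); your contour argument is conceptually clean, while the paper's version isolates a reusable lemma about $A\mapsto A|A|^{-1}$ valid for any Fredholm $A$. As a minor cosmetic point, the paper writes the lower-right block as $[\Lambda,U]U^\ast$ rather than $-U[\Lambda,U^\ast]$, so the factor $2$ appears by cyclicity alone; but the two expressions are equal, as you note, so your variant is equally valid.
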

\begin{proof}By \cref{eq:Spectral Gap Condition}, $|H|$ is invertible and we have $\Sigma = H|H|^{-1}$. That operator is computed as \begin{align*}\Sigma = \begin{pmatrix}0 & U^\ast\\
U & 0
\end{pmatrix}\,,\end{align*} as seen from $S^\ast = |S|U^\ast$, $H^2 = \diag(S^\ast S, SS^\ast) = \diag(|S|^2, U|S|^2U^\ast)$, $|H|=\diag(|S|,U|S|U^\ast)$. We conclude that \begin{align*}[\Lambda,\Sigma] = \begin{pmatrix}0 & [\Lambda,U^\ast]\\
[\Lambda,U] & 0
\end{pmatrix}\end{align*} and $\Pi \Sigma [\Lambda, \Sigma] = U^\ast[\Lambda,U]\oplus[\Lambda,U]U^\ast$. The claim is now immediate, provided $[\Lambda,U]$ is trace class. This holds true by the following lemma, because $[\Lambda, S]$ already is trace class.
\end{proof}
\begin{lem} Let $A:\mathcal{K}\to\mathcal{K}$ be Fredholm. If $[A,\Lambda]$ is trace class, then so is $[A|A|^{-1},\Lambda]$.
\end{lem}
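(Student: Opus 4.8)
The plan is to express the polar part $A|A|^{-1}$ through a resolvent integral and to commute $\Lambda$ past it under the integral sign, reducing everything to the hypothesis that $[A,\Lambda]$ is trace class. First I would fix the finite-rank correction needed when $A$ has a kernel. Write $P_0$ for the orthogonal projection onto $\ker A$, which is finite rank since $A$ is Fredholm, and set
$$ C := A^\ast A + P_0\,. $$
Then $C$ is bounded, self-adjoint and strictly positive, say $C \ge \ell > 0$, hence boundedly invertible. Because $P_0$ acts as the identity on $\ker A = \ker|A|$ and vanishes on $(\ker A)^\perp$, one checks that $A\,C^{-1/2}$ coincides with the partial isometry $U$ in the polar decomposition $A = U|A|$; this is the meaning I attach to $A|A|^{-1}$. (In the intended application $A = S$ is invertible, so $P_0 = 0$, $C = A^\ast A$, and $U$ is genuinely unitary, as in the preceding lemma.) It thus suffices to show that $[U,\Lambda]$ is trace class with $U = A\,C^{-1/2}$.

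By the Leibniz rule,
$$ [U,\Lambda] = [A,\Lambda]\,C^{-1/2} + A\,[C^{-1/2},\Lambda]\,. $$
The first summand is a trace-class operator times the bounded operator $C^{-1/2}$, hence trace class, and $A$ is bounded; so the claim reduces to showing that $[C^{-1/2},\Lambda]$ is trace class. As a preliminary, note that $[C,\Lambda]$ already is, since
$$ [C,\Lambda] = A^\ast[A,\Lambda] + [A^\ast,\Lambda]\,A + [P_0,\Lambda]\,, $$
where $[A^\ast,\Lambda] = -[A,\Lambda]^\ast$ is trace class by hypothesis, $A$ and $A^\ast$ are bounded, and $[P_0,\Lambda]$ is finite rank because $P_0$ is.

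For the final and essential step I would use the integral representation $C^{-1/2} = \frac{1}{\pi}\int_0^\infty \lambda^{-1/2}\,(C+\lambda)^{-1}\,\dif\lambda$, valid in operator norm thanks to $C \ge \ell > 0$. Commuting with $\Lambda$ and using $[(C+\lambda)^{-1},\Lambda] = -(C+\lambda)^{-1}[C,\Lambda](C+\lambda)^{-1}$ gives
$$ [C^{-1/2},\Lambda] = -\frac{1}{\pi}\int_0^\infty \lambda^{-1/2}\,(C+\lambda)^{-1}[C,\Lambda](C+\lambda)^{-1}\,\dif\lambda\,. $$
Each integrand is trace class, with $\|(C+\lambda)^{-1}[C,\Lambda](C+\lambda)^{-1}\|_1 \le (\ell+\lambda)^{-2}\,\|[C,\Lambda]\|_1$, and $\int_0^\infty \lambda^{-1/2}(\ell+\lambda)^{-2}\,\dif\lambda < \infty$ (the integrand is $O(\lambda^{-1/2})$ near $0$ and $O(\lambda^{-5/2})$ at infinity). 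Hence the integral converges in trace norm, so $[C^{-1/2},\Lambda]$, and therefore $[U,\Lambda]$, is trace class.

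The step I expect to be the main obstacle is precisely the passage from ``$[C,\Lambda]$ trace class'' to ``$[C^{-1/2},\Lambda]$ trace class'': the square root does not transfer trace-class estimates directly, and the resolvent integral is what does the work, so one must verify both that the representation converges and that the resulting trace-norm bound is integrable at $\lambda = 0$ and $\lambda = \infty$. A secondary, purely bookkeeping subtlety is making the identification $A|A|^{-1} = A\,C^{-1/2}$ precise when $\ker A \neq 0$; this is handled cleanly by the finite-rank shift $P_0$, whose contributions are harmless since finite rank is trace class.
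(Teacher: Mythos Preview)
Your proof is correct and follows the same route as the paper's: reduce to $[A^\ast A,\Lambda]$ being trace class via the Leibniz rule, then use the resolvent integral for the inverse square root to reach $[|A|^{-1},\Lambda]$ (equivalently $[C^{-1/2},\Lambda]$). Your finite-rank shift $C=A^\ast A+P_0$ is in fact slightly more careful than the paper, which invokes $A^\ast A\ge\varepsilon>0$ directly from Fredholmness and thus tacitly assumes $\ker A=\{0\}$; your correction makes the argument go through for general Fredholm $A$ at no extra cost.
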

\begin{proof}
The commutator property is inherited under taking adjoints and products; and if $A\geq\varepsilon>0$ also under taking inverses, $[A^{-1},\Lambda]=-A^{-1}[A,\Lambda]A^{-1}$. In the latter case the property is also passed down to $A^{-1/2}$ because of \begin{align*}A^{-1/2} = C \int_0^{\infty} \lambda^{-1/2} (A+\lambda)^{-1}\dif{\lambda}\end{align*}($C^{-1} = \int_0^{\infty} \lambda^{-1/2} (1+\lambda)^{-1}\dif{\lambda}$). In particular, the property applies to $A^\ast A = |A|^2$ and to $|A|^{-1}$, where we used that $A$ is Fredholm through $A^\ast A \geq \varepsilon > 0$; finally it applies to $A|A|^{-1}$.
\end{proof} 

We notice that the index \cref{eq:Reformulation of bulk index for spectral gap BEC proof} is independent of the choice of the switch function, this being tantamount to the vanishing of the expression when $\Lambda$ is replaced by a function of compact support. Then, in fact, $\Lambda$ would already be trace class and the claim seen by expanding the commutator. In particular we may pick $\Lambda=\mathds{1}-\chi_a$, $\chi_a$ being the projection seen in \cref{eq:partial isometry condition}. 
We then conclude by (\cite{Avron1994220}, Theorems 6.1, 5.2) that the bulk index is that of a pair of projections: \begin{align*}\mathcal{N} &= -\tr(U^\ast\chi_a U-\chi_a)\\ &=\findex(\chi_a,U^\ast \chi_a U) = \findex(\chi_a U \chi_a:\im\chi_a\to\im\chi_a) \\ &= \findex(\iota_a^\ast U \iota_a)\,.\end{align*} 

We now turn to the edge index and first state a definition: Let \begin{align*}\sigma_{ess}(A)=\{\lambda\in\mathbb{C}|A-\lambda\mathds{1}\text{ is not Fredholm}\}\end{align*} be the essential spectrum of a (not necessarily self-adjoint) closed operator $A$. It enjoys stability under compact perturbations $K$, i.e. $\sigma_{ess}(A)=\sigma_{ess}(A+K)$, \cite{Booss_Topology_and_Analysis}.
\begin{lem}\label{lem:essential spectrum of half line operator is in essential spectrum of whole operator}
Suppose $A:\mathcal{K}\to\mathcal{K}$ is such that $[A,\Lambda]$ is compact, where $\Lambda$ is some (and hence any) switch function. Then \begin{align}\sigma_{ess}(\iota_a^\ast A\iota_a)\subset\sigma_{ess}(A)\,.\label{eq:essential spectrum of half line operator is in essential spectrum of whole operator}\end{align} In particular, if $A$ is Fredholm, then so is $\iota_a^\ast A\iota_a$.
\end{lem}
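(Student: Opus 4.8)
The plan is to prove the equivalent complement inclusion: it suffices to show that whenever $A-\lambda$ is Fredholm on $\mathcal{K}$, the compression $\iota_a^\ast A\iota_a-\lambda=\iota_a^\ast(A-\lambda)\iota_a$ is Fredholm on $\mathcal{H}_a$. Since $[A-\lambda,\Lambda]=[A,\Lambda]$ is still compact, I may absorb $\lambda$ into $A$ and assume from the outset that $A$ itself is Fredholm, the goal being that $\iota_a^\ast A\iota_a$ is Fredholm; the ``in particular'' clause is then just the case $\lambda\notin\sigma_{ess}(A)$ read off for general $\lambda$.

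The first step is to trade the switch function for the sharp projection. The functions $\Lambda$ and $\mathds{1}-\chi_a=\chi_{(a,\infty)}$ agree for $\pm n$ large, so their difference is a multiplication operator of finite support, hence finite rank; commutators with a finite rank multiplication are finite rank. Therefore $[\chi_a,A]=-[\Lambda,A]+[\text{finite rank},A]$ is compact. (The same remark, that two admissible switch functions differ by a finite rank operator, justifies the ``some and hence any'' in the hypothesis.) Set $C:=[\chi_a,A]$, compact.

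The crux is to manufacture a parametrix for the compression out of a parametrix $B$ for $A$, namely $AB=\mathds{1}-K_1$ and $BA=\mathds{1}-K_2$ with $K_1,K_2$ compact, the natural guess being $\iota_a^\ast B\iota_a$. The only subtlety is that compressions do not compose, $(\iota_a^\ast A\iota_a)(\iota_a^\ast B\iota_a)=\iota_a^\ast A\chi_a B\iota_a$ rather than $\iota_a^\ast AB\iota_a$, and the discrepancy is governed exactly by $C$. Writing $A\chi_a=\chi_a A-C$ and invoking $\iota_a^\ast\chi_a=\iota_a^\ast$, $\chi_a\iota_a=\iota_a$, $\iota_a^\ast\iota_a=\mathds{1}_{\mathcal{H}_a}$ from \cref{eq:partial isometry condition}, one finds $(\iota_a^\ast A\iota_a)(\iota_a^\ast B\iota_a)=\mathds{1}_{\mathcal{H}_a}-\iota_a^\ast K_1\iota_a-\iota_a^\ast C B\iota_a$, which is the identity minus a compact operator. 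The opposite product is handled symmetrically: commuting $\chi_a$ the other way, $\chi_a A=A\chi_a+C$, followed by $BA=\mathds{1}-K_2$, yields $(\iota_a^\ast B\iota_a)(\iota_a^\ast A\iota_a)=\mathds{1}_{\mathcal{H}_a}-\iota_a^\ast K_2\chi_a\iota_a+\iota_a^\ast BC\iota_a$, again the identity minus a compact operator. Hence $\iota_a^\ast B\iota_a$ is a two-sided parametrix modulo compacts and $\iota_a^\ast A\iota_a$ is Fredholm, which is \cref{eq:essential spectrum of half line operator is in essential spectrum of whole operator}.

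I expect the heart of the matter to be precisely this non-multiplicativity of compression and its repair: once one recognizes that compactness of $[\chi_a,A]$ (equivalently of $[\Lambda,A]$) is exactly what makes $\iota_a^\ast B\iota_a$ a parametrix for the compression, the remainder is bookkeeping with the isometry relations of \cref{eq:partial isometry condition}. The argument is symmetric in the two products, so no separate work is needed, and finiteness of $\findex(\iota_a^\ast A\iota_a)$ is automatic once Fredholmness is established.
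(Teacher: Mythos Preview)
Your proof is correct but takes a different route from the paper's. You build an explicit parametrix for the compression $\iota_a^\ast A\iota_a$ out of a parametrix $B$ for $A$, the key observation being that the failure of multiplicativity of compression is governed by the compact commutator $C=[\chi_a,A]$. The paper instead writes the algebraic identity
\[
A=\chi_a A\chi_a+(\mathds{1}-\chi_a)A(\mathds{1}-\chi_a)+K
\]
with $K=[\chi_a,A](\mathds{1}-\chi_a)+(\mathds{1}-\chi_a)[A,\chi_a]$ compact, identifies the first two terms via the unitary $\iota_a\oplus\tilde\iota_a$ with the direct sum $\iota_a^\ast A\iota_a\oplus\tilde\iota_a^\ast A\tilde\iota_a$, and then invokes stability of $\sigma_{ess}$ under compact perturbations together with $\sigma_{ess}(X\oplus Y)=\sigma_{ess}(X)\cup\sigma_{ess}(Y)$. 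The paper's argument thus yields slightly more, namely the equality $\sigma_{ess}(A)=\sigma_{ess}(\iota_a^\ast A\iota_a)\cup\sigma_{ess}(\tilde\iota_a^\ast A\tilde\iota_a)$, and avoids Atkinson's theorem; your approach is more direct and self-contained for the stated inclusion. Both are standard and either is adequate here.
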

\begin{proof}
We have \begin{align}A=\Lambda A \Lambda + (\mathds{1}-\Lambda) A (\mathds{1}-\Lambda) + K\label{eq:Decomposition of A in spectral gap BEC}\end{align} with $K = \Lambda A (\mathds{1}-\Lambda) + (\mathds{1}-\Lambda)A \Lambda = [\Lambda,A](\mathds{1}-\Lambda) + (\mathds{1}-\Lambda)[A,\Lambda]$ compact.

The injection $\iota_a$ is matched by another one, $\tilde{\iota}_a$, corresponding to the complementary half-line $\mathbb{Z}\setminus \mathbb{Z}_a$. For $\Lambda=\chi_a$ the first two terms on the RHS of \cref{eq:Decomposition of A in spectral gap BEC} are \begin{align*}\chi_a A \chi_a + (\mathds{1}-\chi_a) A (\mathds{1}-\chi_a)\cong\iota_a^\ast A \iota_a \oplus \tilde{\iota}_a^\ast A \tilde{\iota}_a\end{align*} because of \cref{eq:relations with iota and chi_a} and the unitarity of $\iota_a\oplus\tilde{\iota}_a$. Eq. \cref{eq:essential spectrum of half line operator is in essential spectrum of whole operator} follows.
\end{proof}

For $A=S$ the lemma yields that $S_a$ is Fredholm by \cref{eq:relation between S_a and S}. Thus $\im S_a$ is closed, $\ker S_a^\ast = \coker S_a$, and the edge index \cref{eq:Fredholm index of S_a} is a Fredholm index, \begin{align}\mathcal{N}_a = \findex S_a\,.\label{eq:Edge index is a Fredholm index}\end{align} The proof of the bulk-edge duality, $\mathcal{N}=\mathcal{N}_a$, is concluded by 

\begin{lem}
\begin{align}
\findex \iota_a^\ast S \iota_a = \findex  \iota_a^\ast U \iota_a\,.
\label{eq:BEC for spectral gap}\end{align}
\end{lem}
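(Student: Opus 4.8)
The plan is to deduce \cref{eq:BEC for spectral gap} from a multiplicative (logarithmic) property of the compressed index, applied to the polar decomposition $S=U|S|$. First I would record the general principle: whenever $A,B:\mathcal{K}\to\mathcal{K}$ are invertible with $[A,\Lambda],[B,\Lambda]$ compact (equivalently, with $[A,\chi_a],[B,\chi_a]$ compact, since we may take $\Lambda=\mathds{1}-\chi_a$), all three compressions $\iota_a^\ast A\iota_a$, $\iota_a^\ast B\iota_a$, $\iota_a^\ast AB\iota_a$ are Fredholm by \cref{lem:essential spectrum of half line operator is in essential spectrum of whole operator}, and moreover
\[\findex(\iota_a^\ast AB\iota_a)=\findex(\iota_a^\ast A\iota_a)+\findex(\iota_a^\ast B\iota_a)\,.\]

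The computation behind this identity is that the two Fredholm operators $\iota_a^\ast AB\iota_a$ and $(\iota_a^\ast A\iota_a)(\iota_a^\ast B\iota_a)$ differ only by a compact operator. Using $\iota_a\iota_a^\ast=\chi_a$ from \cref{eq:partial isometry condition} together with $\iota_a^\ast=\iota_a^\ast\chi_a$, one finds
\[\iota_a^\ast AB\iota_a-(\iota_a^\ast A\iota_a)(\iota_a^\ast B\iota_a)=\iota_a^\ast A(\mathds{1}-\chi_a)B\iota_a=\iota_a^\ast[\chi_a,A](\mathds{1}-\chi_a)B\iota_a\,,\]
where the last equality uses $\chi_a A(\mathds{1}-\chi_a)=[\chi_a,A](\mathds{1}-\chi_a)$, valid because $\chi_a(\mathds{1}-\chi_a)=0$. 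Since $[\chi_a,A]$ equals $[A,\Lambda]$ up to sign and is therefore compact, the right-hand side is compact; invariance of the index under compact perturbations and its additivity under composition then yield the displayed formula.

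Next I would apply this with $A=U$ and $B=|S|$. In the spectral gap case $S$ is invertible, so $U$ is unitary and $|S|\ge\varepsilon>0$; both factors are thus invertible, and the required compactness holds: $[\Lambda,U]$ is trace class as shown above, while $[\Lambda,|S|]$ is trace class by the same commutator calculus, since $|S|=|S|^2|S|^{-1}$ and the trace-class commutator property is inherited under adjoints, products, and — for the invertible positive operator $|S|^2=S^\ast S$ — inverses. Consequently
\[\findex(\iota_a^\ast S\iota_a)=\findex(\iota_a^\ast U\iota_a)+\findex(\iota_a^\ast|S|\iota_a)\,,\]
and it remains only to show that the last term vanishes. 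This is the genuinely new point: $\iota_a^\ast|S|\iota_a$ is self-adjoint, and it is Fredholm because $\sigma_{ess}(\iota_a^\ast|S|\iota_a)\subset\sigma_{ess}(|S|)\subset[\varepsilon,\infty)$ by \cref{lem:essential spectrum of half line operator is in essential spectrum of whole operator}. A self-adjoint Fredholm operator has vanishing index, so $\findex(\iota_a^\ast|S|\iota_a)=0$, which establishes \cref{eq:BEC for spectral gap}.

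The main obstacle is the compactness of the cross term $\iota_a^\ast A(\mathds{1}-\chi_a)B\iota_a$ and, underlying it, the verification that $[\Lambda,|S|]$ is compact; once the multiplicative law is in place, isolating the self-adjoint positive factor $|S|$ makes the vanishing of its contribution transparent. An alternative route would be a norm-continuous homotopy $t\mapsto\iota_a^\ast U|S|^t\iota_a$, $t\in[0,1]$, connecting $\iota_a^\ast U\iota_a$ to $\iota_a^\ast S\iota_a$ through Fredholm operators, followed by homotopy invariance of the index; this reduces to the same commutator estimates but is less clean, so I would favour the multiplicative argument.
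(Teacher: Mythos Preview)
Your proof is correct but takes a genuinely different route from the paper. The paper argues by homotopy: it interpolates linearly, $S_t=tU+(1-t)S=U(t\mathds{1}+(1-t)|S|)$, observes that $S_t^\ast S_t\ge\delta>0$ so each $S_t$ is invertible, checks that $[\Lambda,S_t]$ is compact, and then invokes \cref{lem:essential spectrum of half line operator is in essential spectrum of whole operator} to conclude that $\iota_a^\ast S_t\iota_a$ is Fredholm for all $t\in[0,1]$; the equality of indices follows from continuity of the Fredholm index. You instead exploit the multiplicative law for compressions (the Toeplitz-type identity $\findex(\iota_a^\ast AB\iota_a)=\findex(\iota_a^\ast A\iota_a)+\findex(\iota_a^\ast B\iota_a)$, valid once the cross term $\iota_a^\ast A(\mathds{1}-\chi_a)B\iota_a$ is compact), apply it to $S=U|S|$, and kill the $|S|$-contribution by self-adjointness. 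Both arguments rest on the same commutator calculus already developed in the paper (to get $[\Lambda,U]$ and $[\Lambda,|S|]$ trace class), so neither is materially harder. The paper's homotopy is marginally more economical in that it never needs $[\Lambda,|S|]$ explicitly; your approach has the virtue of isolating a reusable general principle and making the vanishing of the positive factor's index completely transparent, without appeal to continuity. Amusingly, the alternative you sketch at the end (a homotopy through $U|S|^t$) is essentially the paper's own strategy, just with a different interpolation in the positive part.
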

\begin{proof}
We consider the interpolating family \begin{align*}S_t = t U + (1-t)S = U(t\mathds{1}+(1-t)|S|)\,,\end{align*} ($0\leq t\leq1$), with $S_0 = S$, $S_1 = U$. The hypothesis of \cref{lem:essential spectrum of half line operator is in essential spectrum of whole operator} holds true for $t=0,1$, as remarked before, and thus true for $0\leq t \leq 1$. Moreover, by that lemma, $\iota_a^\ast S_t\iota_a$ is Fredholm if $S_t$ is. That however is immediate from \begin{align*}S_t^\ast S_t = (t\mathds{1}+(1-t)|S|)^2 \geq \delta\,,\end{align*} for some $\delta>0$. Thus \cref{eq:BEC for spectral gap} holds true by the continuity of the index.
\end{proof}

\paragraph{The translation invariant case.}
We here assume that $S:\mathcal{K}\to\mathcal{K}$ commutes with the shift operator, whence $S$ is of Toeplitz form in the position basis $(\delta_n)_{n\in\mathbb{Z}}$ of $\ell^2(\mathbb{Z})$, $$\langle\delta_n, S\delta_{n'}\rangle=S_{n-n'}\,,$$ where the maps $S_m:\mathbb{C}^N\to\mathbb{C}^N$ may themselves be viewed as matrices. For simplicity we assume that they rapidly decay in $m\in\mathbb{Z}$.

\begin{prop}\label{prop:BEC in translation invariant case}The sum $$S(z):=\sum_{m\in\mathbb{Z}}S_m z^{-m}$$ is absolutely convergent for $|z|=1$, i.e. for $z$ on the unit circle $\mathcal{C}$. Then the spectral gap condition \cref{eq:Spectral Gap Condition} holds iff $\det S(z)$ vanishes nowhere on $\mathcal{C}$. In that case the index \cref{eq:Reformulation of bulk index for spectral gap BEC proof} is (the negative of) the winding number of $\mathcal{C}\ni z\mapsto \det S(z)\in\mathbb{C}$ (Zak number \cite{Zak_PhysRevLett.62.2747}).\end{prop}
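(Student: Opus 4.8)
The first claim is immediate: rapid decay of $(S_m)_{m\in\mathbb{Z}}$ gives $\sum_m\norm{S_m}<\infty$, and $\abs{z^{-m}}=1$ for $z\in\mathcal{C}$, so the series defining $S(z)$ converges absolutely and uniformly there; differentiating term by term (each derivative costs a factor $m$, still summable) shows moreover that $z\mapsto S(z)$ is smooth. For the second claim the plan is to diagonalize by Bloch--Floquet theory: the discrete Fourier transform $\mathcal{F}\colon\ell^2(\mathbb{Z},\mathbb{C}^N)\to L^2(\mathcal{C},\mathbb{C}^N)$ (with the convention $\hat\psi(z)=\sum_n\psi_nz^{-n}$) intertwines the shift-commuting operator $S$ with multiplication by its symbol $S(z)$, and hence $H$ with multiplication by the fibre matrix $\widehat{H}(z)=\begin{pmatrix}0 & S(z)^\ast\\ S(z) & 0\end{pmatrix}$. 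Thus $\sigma(H)=\bigcup_{z\in\mathcal{C}}\sigma(\widehat{H}(z))$, the union being closed by compactness of $\mathcal{C}$ and continuity of $z\mapsto\widehat{H}(z)$. Since $\widehat{H}(z)$ is Hermitian, $0\in\sigma(\widehat{H}(z))$ iff $\det\widehat{H}(z)=0$; and the off-diagonal block-determinant identity gives $\det\widehat{H}(z)=(-1)^N\abs{\det S(z)}^2$. Therefore $0\notin\sigma(H)$ iff $\det S(z)\neq0$ for all $z\in\mathcal{C}$, which is the asserted equivalence with \cref{eq:Spectral Gap Condition}.

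For the third claim I would start from the reformulation $\mathcal{N}=\tr U^\ast[\Lambda,U]$ of \cref{eq:Reformulation of bulk index for spectral gap BEC proof}, applicable since rapid decay makes $[\Lambda,S]$ trace class and the spectral gap makes $S$ Fredholm. Under $\mathcal{F}$ the fibrewise polar decomposition $S(z)=U(z)\abs{S(z)}$ — legitimate because $S(z)$ is invertible for every $z$ by the spectral gap — identifies $U$ with multiplication by the unitary-valued symbol $U(z)=S(z)\abs{S(z)}^{-1}$, using that $\mathcal{F}$ intertwines the continuous functional calculus of multiplication operators. Smoothness and invertibility of $S(z)$ make $U(z)$ smooth, so its Fourier coefficients $U_m$ (which are exactly the Toeplitz kernel entries $\langle\delta_n,U\delta_{n-m}\rangle$) decay rapidly, justifying all interchanges below.

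The core is then a direct evaluation of the trace through the diagonal of the kernel. Writing $[\Lambda,U](n,n')=(\Lambda(n)-\Lambda(n'))U_{n-n'}$ and using $U^\ast U=\mathds{1}$, one is led to
\begin{align*}
\tr U^\ast[\Lambda,U]=\sum_{m\in\mathbb{Z}}\Big(\sum_{n\in\mathbb{Z}}\big(\Lambda(n+m)-\Lambda(n)\big)\Big)\tr_{\mathbb{C}^N}\!\big(U_m^\ast U_m\big)=\sum_{m\in\mathbb{Z}}m\,\tr_{\mathbb{C}^N}\!\big(U_m^\ast U_m\big)\,,
\end{align*}
where the inner telescoping sum equals $m$ for any switch function. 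On the other hand, using $U(z)^{-1}=U(z)^\ast=\sum_mU_m^\ast z^{m}$ on $\mathcal{C}$, expanding $\tr\!\big(U(z)^{-1}U'(z)\big)$ in Fourier modes, and integrating term by term gives
\begin{align*}
\frac{1}{2\pi\mathrm{i}}\oint_{\mathcal{C}}\tr\!\big(U(z)^{-1}U'(z)\big)\,\dif z=-\sum_{m\in\mathbb{Z}}m\,\tr_{\mathbb{C}^N}\!\big(U_m^\ast U_m\big)\,.
\end{align*}
The left-hand side is the winding number of $z\mapsto\det U(z)$ about $0$; since $\det U(z)=\det S(z)/\abs{\det S(z)}$ differs from $\det S(z)$ only by a positive factor, it has the same winding number. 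Comparing the two displays yields $\mathcal{N}=-(\text{winding number of }\det S)$, as claimed.

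Parts one and two are standard; the real work, and the main (if modest) obstacle, is the bookkeeping in part three: justifying the interchange of summations via rapid decay of $U_m$, the telescoping identity $\sum_n(\Lambda(n+m)-\Lambda(n))=m$, and the term-by-term integration that identifies $-\sum_m m\,\tr(U_m^\ast U_m)$ with the winding number. The one conceptual point to pin down is that the unitary in the polar decomposition of a Toeplitz operator is again Toeplitz with the fibrewise polar symbol, but this follows directly from $\mathcal{F}$ intertwining functional calculus.
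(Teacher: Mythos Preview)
Your proof is correct and follows essentially the same route as the paper: Bloch decomposition to characterize the gap, fibrewise polar decomposition to identify $U$ with multiplication by $U(z)=S(z)|S(z)|^{-1}$, the telescoping identity $\sum_n(\Lambda(n+m)-\Lambda(n))=m$ to compute $\tr U^\ast[\Lambda,U]=\sum_m m\,\tr(U_m^\ast U_m)$, and finally the identification with the winding number via $\tr(U^{-1}U')$ and the observation that $\det|S(z)|>0$ contributes no winding. The only cosmetic difference is that you make the gap criterion explicit through $\det\widehat{H}(z)=(-1)^N|\det S(z)|^2$, whereas the paper just invokes $\sigma(H)=\bigcup_z\sigma(H(z))$.
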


\begin{example}\label{exam:translation invariant example}
In the translation invariant case, \cref{eq:action of S in our model} reduces to $S(z)=Az^{-1}+B$. The spectral gap condition requires that \begin{align}T:=-A^{-1}B\label{eq:definition of T}\end{align} has no eigenvalue of unit modulus, or equivalently that $1$ is not among its singular values. Since $z\mapsto w=z^{-1}$ reverses the orientation of $\mathcal{C}$, the index equals the winding number of $w\mapsto\det(Aw+B)$ and, by the argument principle, the number of zeroes in $\mathcal{C}$, i.e. the algebraic number of eigenvalues $w$ of $T$ with $|w|<1$; equivalently it is the number of zeroes of $\det S(z)$ with $|z|>1$. 

\end{example}
\begin{proof}[Proof of \cref{prop:BEC in translation invariant case}]
In line with the general assumptions of this section, we first verify that $[\Lambda, S]$ is trace class. This follows from $$\langle\delta_n,[\Lambda, S]\delta_{n'}\rangle=(\Lambda(n)-\Lambda(n'))S_{n-n'}\,,$$ and from \cref{eq:bound on trace class norm} by the reasoning used in the proof of \cref{lem:=commutator of L and P is trace class}. Second, we discuss the gap condition \cref{eq:Spectral Gap Condition}: By Bloch decomposition, 
\begin{align}
S=\int_{\mathcal{C}}^{\oplus}S(z)\frac{\dif{s}}{2\pi}
\label{eq:Bloch decomposition of S}
\end{align}
with $\dif{s}=|\dif{z}|=-iz^{-1}\dif{z}$ and w.r.t. $\mathcal{K}=\ell^2(\mathbb{Z})\otimes\mathbb{C}^N$, $\ell^2(\mathbb{Z})\cong\int_{\mathcal{C}}^\oplus \mathbb{C}\, \dif{s}/2\pi$. The isomorphism is given by $$ \psi(z)=\sum_{n\in\mathbb{Z}} z^{-n}\psi_n $$ with Parseval identity 
\begin{align}
\sum_{n\in\mathbb{Z}} \varphi_n^\ast\psi_n = \int_{\mathcal{C}} \varphi(z)^\ast \psi(z) \frac{\dif{s}}{2\pi}\,.
\label{eq:Parseval identity}
\end{align} 
Moreover $(S\psi)(z)=S(z)\psi(z)$ is readily verified, proving \cref{eq:Bloch decomposition of S}. Since $S(z)$ is smooth we have $\sigma(H)=\bigcup_{z\in\mathcal{C}}\sigma(H(z))$. The claim on the spectral gap property now follows, and we assume its validity in the sequel. 

Next we compute the index \cref{eq:Reformulation of bulk index for spectral gap BEC proof}. The fibers $U(z)=S(z)/|S(z)|$ are smooth as well, whence the sum $$U(z)=:\sum_{m\in\mathbb{Z}}U_m z^{-m}$$ has rapidly decaying coefficients $U_m$. Thus 
\begin{align*}
\mathcal{N} &= \tr U^\ast[\Lambda, U]= \sum_{n\in\mathbb{Z}} \tr\langle U\delta_n, [\Lambda, U]\delta_n\rangle\\
&= \sum_{n,m}\tr |U_{m-n}|^2(\Lambda(m)-\Lambda(n))= \sum_{n,k}\tr|U_k|^2(\Lambda(n+k)-\Lambda(n))\\
&= \sum_{k} k\tr|U_k|^2\,,
\end{align*}
where we used $\sum_n \Lambda(n+1)-\Lambda(n)=1$. Using $-z\partial_z U(z) = \sum_{m} m U_m z^{-m}$ and \cref{eq:Parseval identity} we obtain
\begin{align*}
\mathcal{N} &= \frac{i}{2\pi} \int_{\mathcal{C}} \dif{z}\, \tr U(z)^\ast \partial_z U(z) \\
&=  \frac{i}{2\pi}  \int_{\mathcal{C}} \dif{z}\, \frac{\frac{\dif{\,}}{\dif{z}}\det U(z)}{\det U(z)} =  \frac{i}{2\pi}  \int_{\mathcal{C}} \frac{\dif{\,\det S(z)}}{\det S(z)}\,,
\end{align*}
because $\det |S(z)| > 0$ has no winding.
\end{proof}

\section{\label{sec:Generalized-States-of-Zero-Energy}Generalized states
of zero energy}
Zero energy edge states will be extended to bulk states which are however not $\ell^2$ on the other side of the edge. For that purpose, let
us consider the (bulk) equation $S\psi^{+}=0$ as a finite difference
equation for $\psi^{+}:\mathbb{Z}\to\mathbb{C}^{N}$. The edge index
can be characterized in terms of their behavior at $-\infty$. In fact
we have:
\begin{lem}
\label{lem:reformulation of edge index via non-normalizable bulk states}
\begin{align}
\mathcal{N}_{a} & = \dim V\,,\qquad V:=\{\psi^{+}:\mathbb{Z}\to\mathbb{C}^{N}|S\psi^{+}=0\text{ and }\,\psi_{n}^{+}\text{ is }\ell^{2}\text{ at }n\to-\infty\}\label{eq:reformulation of edge invariant}\,.
\end{align}
In particular $\mathcal{N}_{a}$ is seen to be independent of $a$,
independently of \cref{thm:BEC}. Moreover every $\psi\in V$ is uniquely determined by its restriction to $\mathbb{Z}_a$ for any $a$.
\begin{proof}
By \cref{eq:Fredholm index of S_a} we are led to determine the null spaces of $S_a$ and $S_a^\ast$ separately. The two operators act by \cref{eq:action of S in our model,eq:action of S* in our model} with $n\leq a$; \cref{eq:action of S in our model} comes without boundary conditions, whereas \cref{eq:action of S* in our model} is supplemented by $\psi_{a+1}^{-}=0$. By $B_{n}\in GL_N(\mathbb{C})$ the difference equation $S\psi^{+}=0$ can be solved recursively to the right ($\psi_{n-1}^{+}$ determines $\psi_{n}^{+}$) whereas $S^{\ast}\psi^{-}=0$ can be solved recursively to the left. Hence $\ker S_a=V$, whereas the Dirichlet boundary condition implies $\ker S_{a}^{\ast}=\{0\}$.
\end{proof}
\end{lem}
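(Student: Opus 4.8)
The plan is to compute the index $\mathcal{N}_a = \dim\ker S_a - \dim\ker S_a^*$ of \cref{eq:Fredholm index of S_a} by determining the two kernels separately, exploiting the fact that the truncation to $\mathbb{Z}_a$ acts on $S$ and on $S^*$ asymmetrically. First I would make explicit how the algebraic truncation $S_a = \iota_a^*S\iota_a$ of \cref{eq:relation between S_a and S} operates: for $\psi$ supported on $\mathbb{Z}_a$ one has $(S_a\psi^+)_n = A_n\psi_{n-1}^+ + B_n\psi_n^+$ for every $n\leq a$, with \emph{no} constraint arising at the right endpoint $n=a$, because the formula \cref{eq:action of S in our model} couples site $n$ only to sites $n-1$ and $n$, both lying inside $\mathbb{Z}_a$. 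The adjoint $S_a^* = \iota_a^*S^*\iota_a$, by contrast, couples site $n$ to $n+1$ through \cref{eq:action of S* in our model}, so truncation does sever a bond at the endpoint and effectively imposes the Dirichlet condition $\psi_{a+1}^-=0$.

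For $\ker S_a$: solving $S_a\psi^+=0$ is, by invertibility of the $A_n$, exactly the recursion $\psi_{n-1}^+ = T_n\psi_n^+$ of \cref{eq:finite difference equation formally extended to non normalizable vectors}, now run on the half-line $n\leq a$. Any prescribed $\psi_a^+\in\mathbb{C}^N$ determines the entire sequence to its left, and normalizability in $\mathcal{H}_a$ is precisely the requirement that $\psi_n^+$ be $\ell^2$ as $n\to-\infty$. I would then identify $\ker S_a$ with $V$ by restriction: since both $A_n$ and $B_n$ are invertible, the recursion can equally be run to the right, so each half-line solution extends uniquely to a global solution on $\mathbb{Z}$, and the $\ell^2$-at-$-\infty$ condition is insensitive to this extension. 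This produces a linear bijection $V\to\ker S_a$, whence $\dim\ker S_a = \dim V$, and it simultaneously yields the uniqueness of the extension asserted in the statement.

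For $\ker S_a^*$: using the Dirichlet condition, $S_a^*\psi^-=0$ reads $B_a^*\psi_a^-=0$ at $n=a$ and $A_{n+1}^*\psi_{n+1}^- + B_n^*\psi_n^- = 0$ for $n<a$. Invertibility of $B_a$ forces $\psi_a^-=0$, and then a descending induction in $n$, again invoking invertibility of the $B_n$, propagates this to $\psi_n^-=0$ for all $n\leq a$. Hence $\ker S_a^*=\{0\}$ and $\mathcal{N}_a = \dim V$. Since $V$ is defined on all of $\mathbb{Z}$ with no reference to $a$, the independence of $\mathcal{N}_a$ from $a$ follows immediately, independently of \cref{thm:BEC}.

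The only genuinely delicate point is the asymmetry between the two kernels, which is purely a matter of hopping direction relative to the cut: $S$ hops inward while $S^*$ hops outward, so the same truncation leaves $S_a$ boundary-free but equips $S_a^*$ with a Dirichlet condition. Once this directional bookkeeping is correctly carried out, both kernels are read off from the one-step recursion and the remainder is elementary linear algebra.
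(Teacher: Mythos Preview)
Your proposal is correct and follows essentially the same approach as the paper: compute $\ker S_a$ and $\ker S_a^*$ separately via \cref{eq:Fredholm index of S_a}, observe that the truncation imposes no boundary condition on $S_a$ but a Dirichlet condition on $S_a^*$, and then use the invertibility of $A_n,B_n$ to run the one-step recursion in the appropriate direction. Your treatment is somewhat more explicit (you spell out the identification $\ker S_a\cong V$ via restriction/extension and the descending induction for $\ker S_a^*$), but the argument is the same.
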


We next show that the edge index may be computed using a finite-box
truncation.

\begin{figure}[h]
\centerline{\begin{tikzpicture}[yscale=0.3]  
%	\draw [lightgray, fill=lightgray] (0, 0) rectangle (-4,4); 
%	\draw [dashed, lightgray, line width=4cm] (2, 4) -- (2, 5); 
%	\draw [dashed, lightgray, line width=4cm] (2, -4.05) -- (2, -5); 
%	\draw [lightgray, line width=0.2cm] (-4.2, 0) -- (-4.2, 4); 
%	\draw [lightgray, line width=0.2cm] (-4.5, 0) -- (-4.5, 4); 
%	\draw [lightgray, line width=0.2cm] (-4.8, 0) -- (-4.8, 4); 
	\draw [thin, -] (0,0) -- (-5.5,0); 
	\draw [thin, dashed] (0,0) -- (2,0);
%	\draw [thin, ->] (0,0) -- (0,5.5);  
%	\node [below right] at (5.5,0) {$\mathbb{Z}$}; 
%	\node [above] at (0,5.5) {$\hat{e}_2$}; 
%	\draw [gray, line width=0.2cm] (-0.01, 0) -- (-0.01, 4); 
%	\draw [dashed, gray, line width=0.2cm] (0.01, 4) -- (0.01, 5); 
%	\draw [dashed, gray, line width=0.2cm] (0.01, -5) -- (0.01, -4); 
%	\node [above] at (-2.5, 2) {Vacuum}; 
%	\node [above] at (2, 2) {Material}; 
%	\node [above, rotate=90] at (0, 2) {Edge}; 
	\draw [thin] (0,0) -- (0,-0.2);
	\node [below] at (0.7,-0.4) {$a\to+\infty$};
	\draw [thin] (-2,0) -- (-2,-0.2);
	\node [below] at (-2,-0.4) {$0$};
	\draw [{[-]}, ultra thick] (0,0) -- (-2,0);
	\node [above] at (-1,0.4) {$\operatorname{supp}\Lambda_a$};
\end{tikzpicture}}

\caption{\label{fig:Finite Window Used to compute the Edge's Index}The finite
box used in order to approximate the edge index.}
\end{figure}
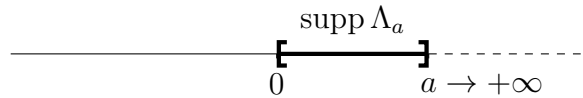
\begin{lem}
\label{lem:finite window regularization}The common value of $\mathcal{N}_{a}$, ($a\in\mathbb{Z}$)
is 
\begin{align}
\mathcal{N}^{\sharp} & =  \lim_{a\to+\infty}\tr(\Pi\Lambda_{a}P_{0,a})\label{eq:finite window regularization}\,.
\end{align}
\end{lem}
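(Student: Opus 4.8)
The plan is to reduce the claim to a single localization statement — that the zero-energy edge states of $H_a$ carry vanishing weight in any fixed left window as the edge recedes, $a\to+\infty$ — and to prove that statement by a Gram-matrix estimate resting on \cref{lem:reformulation of edge index via non-normalizable bulk states}. First I would dispose of the chirality and of the switch function. By the proof of \cref{lem:reformulation of edge index via non-normalizable bulk states} one has $\ker H_a=\ker S_a$ sitting entirely in the $+$-sector (because $\ker S_a^\ast=\{0\}$), so $\Pi$ acts as the identity on $\im P_{0,a}$; since $\Pi$ commutes with the multiplication operator $\Lambda_a$ this gives $\tr(\Pi\Lambda_a P_{0,a})=\tr(\Lambda_a P_{0,a})$. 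As $P_{0,a}$ is a finite-rank projection of rank $\dim V=\mathcal N_a$, this equals $\dim V-\tr((\mathds 1-\Lambda_a)P_{0,a})$. From the definition of $\Lambda_a$ (see \cref{fig:Finite Window Used to compute the Edge's Index}), within $\mathbb Z_a$ it differs from $1$ only on the fixed left transition $n\le n_0$ near the origin, so $\mathds 1-\Lambda_a$ restricted to $\mathbb Z_a$ is bounded by $1$ and supported there. Hence it suffices to show $\tr(\chi_{(-\infty,n_0]}P_{0,a})\to0$, where $\chi_{(-\infty,n_0]}$ is the projection onto the sites $n\le n_0$.

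I would then make this quantitative. Fix a basis $v_1,\dots,v_d$ of the $a$-independent space $V$ of \cref{eq:reformulation of edge invariant}; by \cref{lem:reformulation of edge index via non-normalizable bulk states} their restrictions to $\mathbb Z_a$ span $\ker S_a$, so $P_{0,a}=\sum_{i,k}(G^{(a)})^{-1}_{ik}\ket{v_i}\bra{v_k}$ with Gram matrix $G^{(a)}_{ik}=\langle v_i,v_k\rangle_{\ell^2(\mathbb Z_a)}$. Consequently $\tr(\chi_{(-\infty,n_0]}P_{0,a})=\tr\big((G^{(a)})^{-1}M\big)$, where $M_{ik}=\sum_{n\le n_0}\langle v_{i,n},v_{k,n}\rangle$ is a fixed, finite Hermitian matrix — finite precisely because every $v\in V$ is $\ell^2$ at $-\infty$. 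It therefore remains to prove $\lVert(G^{(a)})^{-1}\rVert\to0$, i.e. that the smallest eigenvalue of $G^{(a)}$ diverges, whereupon $\lvert\tr((G^{(a)})^{-1}M)\rvert\le\lVert(G^{(a)})^{-1}\rVert\,\tr M\to0$ delivers the result.

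This last point is the heart of the matter and the step I expect to be the main obstacle. The matrices $G^{(a)}$ increase with $a$ in the positive-definite order, so $\lambda_{\min}(G^{(a)})$ is non-decreasing, and I would argue by contradiction that it is unbounded. If $\lambda_{\min}(G^{(a)})\le C$ along a sequence, choose unit coefficient vectors realizing the minimum; by compactness of the unit sphere of the finite-dimensional $V$ they subconverge to some nonzero $\psi^\ast\in V$, and continuity of the restriction $V\to\ell^2(\mathbb Z_b)$ gives $\lVert\psi^\ast\rVert_{\ell^2(\mathbb Z_b)}^2\le C$ for every fixed $b$, hence $\psi^\ast\in\ell^2(\mathbb Z)$. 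But $\psi^\ast$ is $\ell^2$ at $-\infty$ by construction, so global square-summability would place $\psi^\ast$ in $\ker S$ and thus exhibit a zero-energy eigenstate of $H$, contradicting \cref{assu:zero_is_not_an_eigenvale_of_the_Hamiltonian}. Therefore no nonzero element of $V$ is globally $\ell^2$, forcing $\lambda_{\min}(G^{(a)})\to\infty$.

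The conceptual content, which I would state to orient the reader, is that \cref{assu:zero_is_not_an_eigenvale_of_the_Hamiltonian} makes the bulk continuations of the edge states grow (or at least fail to decay) at $+\infty$; the normalized zero-energy eigenstates of $H_a$ therefore concentrate at the receding edge $n=a$, and their weight in any fixed window tends to zero. Combining the three reductions, $\tr(\Pi\Lambda_a P_{0,a})=\dim V-\tr((\mathds 1-\Lambda_a)P_{0,a})\to\dim V=\mathcal N_a$, so the limit exists and equals the common value of the edge index.
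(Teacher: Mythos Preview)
Your proof is correct and follows essentially the same route as the paper: both reduce to showing that $(\mathds 1-\Lambda_a)P_{0,a}\to 0$ (you in trace, the paper in operator norm followed by the rank bound) and derive this from a compactness argument in the finite-dimensional space $V$ combined with \cref{assu:zero_is_not_an_eigenvale_of_the_Hamiltonian}. Your Gram-matrix computation and the early removal of $\Pi$ via $\ker S_a^\ast=\{0\}$ are concrete repackagings of the paper's abstract maps $\mathcal R_b$, $\mathcal P_{0,a}$, but the underlying mechanism---no nonzero $\psi\in V$ is globally $\ell^2$, hence the minimum of the quadratic form on the unit sphere of $V$ diverges---is identical.
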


Here we denoted by $\Lambda_a$ the switch function $\Lambda$ when viewed as a multiplication operator
on $\ell^{2}(\mathbb{Z}_{a})$ and its descendant spaces. We have
\begin{align}
\iota_{a}\Lambda_{a}=\Lambda\iota_{a}\:,\qquad\Lambda_{a}\iota_{a}^{\ast}=\iota_{a}^{\ast}\Lambda\label{eq:commutation relations between J_a and Lambda}\,.
\end{align}
The switch $\Lambda_a$ roughly restricts states to $n\geq0$ within $n\leq a$, thereby singling out a finite box growing with $a$ (see \cref{fig:Finite Window Used to compute the Edge's Index}). The lemma asserts that edge states are unaffected by this restriction for $a\to+\infty$, because they are concentrated near the edge $n=a$. Consequently the task is to show $\|(\mathds{1}-\Lambda_a)P_{0,a}\|\to0$, $(a\to+\infty)$.

\begin{proof}[Proof of \cref{lem:finite window regularization}]
Let $V$ be the linear space of solutions $\psi=(\psi^+,\,0)$ seen in \cref{eq:reformulation of edge invariant}. By
\begin{align}
\dim V \leq N < \infty
\label{eq:finite dimensionality of V}
\end{align}
all norms on $V$ are equivalent, and we pick one, $\|\cdot\|_V$. For any $b\in\mathbb{Z}$, let $\mathcal{R}_b:V\to\mathcal{H}_{b}$ be defined by restriction. That map is injective by the conclusion of the previous lemma. Therefore and by \cref{eq:finite dimensionality of V} we have 
\begin{align}
\|\mathcal{R}_b\psi\| & \geq  c_{b}\|\psi\|\label{eq:image closed}
\end{align}
for some $c_b>0$. Elements $\psi\in V$ can not be $\ell^{2}$ at $n\to+\infty$ as
well, unless $\psi=0$, since that would imply a solution of $H\psi=0$,
which is ruled out by \cref{assu:zero_is_not_an_eigenvale_of_the_Hamiltonian}.
We thus have 
\begin{align}
\|\mathcal{R}_b\psi\|\to\infty\,,\qquad\,(b\to+\infty)\,.\label{eq:consequence of no zero eigenvalues}
\end{align}

For $b<a$ we denote by $\iota_{ab}:\ell^2(\mathbb{Z}_b)\hookrightarrow\ell^2(\mathbb{Z}_a)$ the injection (extension by zero); correspondingly $\iota_{ab}^\ast:\ell^2(\mathbb{Z}_a)\to\ell^2(\mathbb{Z}_b)$ is the restriction operator. These operators are analogous to those seen in \cref{eq:partial isometry condition}; in fact $\iota_b = \iota_a \iota_{ab}$. For $b$ large enough we have $(\mathds{1}-\Lambda)(1-\chi_b)=0$ by disjointness of support. Using \cref{eq:commutation relations between J_a and Lambda} we get $\mathds{1}-\Lambda=(\mathds{1}-\Lambda)\iota_b\iota_b^\ast = \iota_b(\mathds{1}-\Lambda_b)\iota_b^\ast$ and thus, by multiplication with $\iota_a^\ast$ and $\iota_a$ from left and right,
\begin{align}
\mathds{1}-\Lambda_a &= \iota_{ab} (\mathds{1}-\Lambda_b)\iota_{ab}^\ast
\label{eq:expression for Lambda_a^perp}
\end{align}
($b$ large, $a>b$).

Next we note that $P_{0,a}:\mathcal{H}_a\to\mathcal{H}_a$ induces a natural map $\mathcal{P}_{0,a}:\mathcal{H}_a\to V$, because for any $\psi_a\in\mathcal{H}_a$ the image $P_{0,a}\psi_a$ is the left tail of a solution in $V$ which it fully determines. It satisfies 
\begin{align}
\mathcal{R}_b\mathcal{P}_{0,a} & = \iota_{ab}^{\ast}P_{0,a}\,.\label{eq:exchange middle space between i_b and P_0a with V}
\end{align}

We next claim for any $b$
\begin{align}
\|\mathcal{R}_b\mathcal{P}_{0,a}\| & \to 0,\qquad(a\to+\infty)\,.\label{eq:edge zero modes live near the cut}
\end{align}
We have to show that $\|\mathcal{R}_b\mathcal{P}_{0,a}\psi_{a}\|\to0$
for every sequence $\psi_{a}\in\mathcal{H}_{a}$ with $\|\psi_{a}\|=1$.
Clearly the sequence at hand is at least bounded in $a$ (as well
as in $b>a$) and so is $\|\mathcal{P}_{0,a}\psi_{a}\|$ because of
\cref{eq:exchange middle space between i_b and P_0a with V} and
\cref{eq:image closed}. By compactness ($\dim V<\infty$) we have
$\mathcal{P}_{0,a}\psi_{a}\to\hat{\psi}$, ($a\to+\infty$) upon passing
to a subsequence. Hence $\mathcal{R}_b\mathcal{P}_{0,a}\psi_{a}$
has a limit $\mathcal{R}_b\hat{\psi}$ as $a\to+\infty$,
which inherits the boundedness in $b$. This contradicts \cref{eq:consequence of no zero eigenvalues}
unless $\hat{\psi}=0$, thus proving \cref{eq:edge zero modes live near the cut}. 

That in turn implies, by taking $b$ large and using \cref{eq:expression for Lambda_a^perp,eq:exchange middle space between i_b and P_0a with V},
\begin{align*}
\|(\mathds{1}-\Lambda_{a})P_{0,a}\|\to0\,,\qquad(a\to+\infty)\,.
\end{align*}
The same then holds in trace class norm $\|\cdot\|_{1}$ because
$\|A\|_{1}\leq\|A\|\operatorname{rank}A$ and $\operatorname{rank} P_{0,a}=\dim V$. Finally
\cref{eq:finite window regularization} follows by taking a (redundant)
limit of \cref{eq:edge invariant}.
\end{proof}

\begin{proof}[Proof of \cref{thm:reformulation of edge index in terms of the Lyapunov spectrum}] By \cref{eq:filtration of C^N by the Lyapunov spectrum} and the definition of $\chi_i$, as well as by \cref{assu:zero is not in the Lyapunov spectrum}, the RHS of \cref{eq:reformulation of edge index using Lyapunov spectrum} equals $\dim V_\chi = \dim V_0$ for some $\chi<0$. We also recall \cref{eq:finite difference equation formally extended to non normalizable vectors} and Definition  \cref{eq:reformulation of edge invariant}. We have the inclusions $V_\chi\subseteq V$ for any $\chi<0$ and $V\subseteq V_0$, since $\ell^2\subseteq\ell^\infty$ at $-\infty$. The conclusion now follows from \cref{lem:reformulation of edge index via non-normalizable bulk states}.
\end{proof}

\section{\label{sec:Proof-of-Duality}Proof of duality}

\begin{lem}
\label{lem:Edge projection strongly converges to bulk projection}
As $a\to+\infty$, \begin{align}
\iota_{a}P_{\pm,a}\iota_{a}^{\ast}-P_\pm & \stackrel{s}{\longrightarrow}  0\label{eq:Edge projection strongly converges to bulk projection}\,,\\
[\iota_{a}P_{\pm,a}\iota_{a}^{\ast}-P_\pm,\,\Lambda] & \stackrel{t}{\longrightarrow}  0\label{eq:commutator of projections with Lambda converges to zero}\,.
\end{align}
where $s,\,t$ denote strong and trace norm convergence respectively, and $P_{\pm,a}:=\chi_{(0,\infty)}(\pm H_a)$.
\end{lem}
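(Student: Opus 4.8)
The plan is to prove the two displayed convergences by different mechanisms: \cref{eq:Edge projection strongly converges to bulk projection} is a matter of spectral continuity at a non-eigenvalue, whereas \cref{eq:commutator of projections with Lambda converges to zero} is a dominated-convergence statement fed by the uniform localization of \cref{assu:exp_decay_of_Fermi_projection}. Throughout write $Q_{\pm,a}:=\iota_a P_{\pm,a}\iota_a^\ast$ and $\tilde H_a:=\chi_a H\chi_a=\iota_a H_a\iota_a^\ast$ (by \cref{eq:chiral edge Hamiltonian,eq:partial isometry condition}), which decomposes as $H_a\oplus 0$ on $\mathcal H=\im\chi_a\oplus\im(\mathds 1-\chi_a)$.

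For \cref{eq:Edge projection strongly converges to bulk projection} I would start from $\tilde H_a\to H$ in the strong resolvent sense -- equivalent, up to the strongly vanishing piece $(\mathds 1-\chi_a)$, to the convergence $H_a\to H$ already recorded in the Remark after \cref{eq:chiral edge Hamiltonian}, and itself a consequence of nearest-neighbour hopping together with $\chi_a\to\mathds 1$ strongly. Setting $f_\pm(\lambda):=\chi_{(0,\infty)}(\pm\lambda)$, the identity $f_\pm(0)=0$ gives $f_\pm(\tilde H_a)=Q_{\pm,a}$ and $f_\pm(H)=P_\pm$, so it suffices to upgrade strong resolvent convergence to $f_\pm(\tilde H_a)\to f_\pm(H)$ strongly. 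The only discontinuity of the bounded function $f_\pm$ sits at $\lambda=0$, which carries no spectral weight of $H$ by \cref{assu:zero_is_not_an_eigenvale_of_the_Hamiltonian}; the standard sandwich then closes the argument -- approximate $f_\pm$ by a continuous $g_\delta$ (so $g_\delta(\tilde H_a)\to g_\delta(H)$ strongly), and bound the error through a continuous bump $k_\delta$ concentrated at $0$ with $\langle\psi,k_\delta(\tilde H_a)\psi\rangle\to\langle\psi,k_\delta(H)\psi\rangle\to 0$ as first $a\to+\infty$ and then $\delta\to 0$.

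For \cref{eq:commutator of projections with Lambda converges to zero} treat the lower sign first. In the position basis the commutator $[Q_{-,a}-P_-,\Lambda]$ has block kernel $(\Lambda(n)-\Lambda(n'))\,(Q_{-,a}-P_-)(n,n')$, so its trace norm is at most $\sum_{n,n'}\abs{\Lambda(n)-\Lambda(n')}\,\|(Q_{-,a}-P_-)(n,n')\|$ in the block trace norms of \cref{assu:exp_decay_of_Fermi_projection}. Each summand tends to $0$ as $a\to+\infty$, because \cref{eq:Edge projection strongly converges to bulk projection} forces every matrix element of $Q_{-,a}$ to converge to that of $P_-$. A single term of the summed bound of \cref{assu:exp_decay_of_Fermi_projection} provides the pointwise majorants $\|P_-(n,n')\|,\|Q_{-,a}(n,n')\|\le C(1+\abs n)^\nu e^{-\mu\abs{n-n'}}$ uniformly in $a$ -- here it is essential that the edge Fermi projections obey the same bound. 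Hence each summand is dominated by $2C\abs{\Lambda(n)-\Lambda(n')}(1+\abs n)^\nu e^{-\mu\abs{n-n'}}$, and this majorant is summable: writing $\Lambda(n)-\Lambda(n')$ as a telescoping sum of increments supported in the bounded switch region, only pairs $(n,n')$ straddling that region contribute, and for those the factor $e^{-\mu\abs{n-n'}}$ defeats $(1+\abs n)^\nu$. Dominated convergence yields \cref{eq:commutator of projections with Lambda converges to zero} for the lower sign.

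For the upper sign use $P_+=\mathds 1-P_-$ and the edge spectral decomposition $Q_{+,a}=\chi_a-Q_{-,a}-\iota_a P_{0,a}\iota_a^\ast$, so that $Q_{+,a}-P_+=-(\mathds 1-\chi_a)-(Q_{-,a}-P_-)-\iota_a P_{0,a}\iota_a^\ast$. Since $\chi_a$ and $\Lambda$ are both diagonal, $[\mathds 1-\chi_a,\Lambda]=0$; the middle term is controlled by the previous paragraph; and for the zero-energy term, its rank is at most $N$, so the trace norm of $[\iota_a P_{0,a}\iota_a^\ast,\Lambda]$ is bounded by a fixed multiple of the operator norm $\|\Lambda'\,\iota_a P_{0,a}\iota_a^\ast\|$, where $\Lambda':=\Lambda-\mathds 1$ is supported to the left of the switch. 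By \cref{eq:commutation relations between J_a and Lambda} this equals $\|(\mathds 1-\Lambda_a)P_{0,a}\|$, which vanishes as $a\to+\infty$ by \cref{lem:finite window regularization} -- the edge zero modes pile up near $n=a$ and are invisible to $\Lambda'$. The main obstacle is the lower-sign estimate: producing an $a$-uniform summable majorant and interchanging limit and summation, which is precisely where the edge-uniform decay of \cref{assu:exp_decay_of_Fermi_projection} is indispensable; the jump of $f_\pm$ at $0$ is the other point where an assumption, namely \cref{assu:zero_is_not_an_eigenvale_of_the_Hamiltonian}, must enter.
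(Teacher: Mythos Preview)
Your proof is correct, and for \cref{eq:Edge projection strongly converges to bulk projection} it coincides with the paper's (strong resolvent convergence plus the absence of spectral mass at $0$, i.e.\ Reed--Simon VIII.24(b), which you essentially reprove with the sandwich $g_\delta,k_\delta$).

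For \cref{eq:commutator of projections with Lambda converges to zero} you take a genuinely different route. The paper sets $D_a=\iota_a P_{-,a}\iota_a^\ast-P_-$, invokes a derived weighted operator-norm bound $\|e^{-\mu n}e^{-\varepsilon|n|}D_a\,e^{\mu n}\|\le C_\varepsilon$ (uniform in $a$), decomposes $[D_a,\Lambda]=(\mathds 1-\Lambda)D_a\Lambda-\Lambda D_a(\mathds 1-\Lambda)$, and then splits off a translated switch $\Lambda^b$: the far piece $(\mathds 1-\Lambda)D_a\Lambda^b$ is small uniformly in $a$ once $b$ is large, while the remainder carries the finite-rank factor $\Lambda-\Lambda^b$ and vanishes as $a\to\infty$ by \cref{eq:Edge projection strongly converges to bulk projection}. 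You instead feed the raw kernel sum of \cref{assu:exp_decay_of_Fermi_projection} straight into the bound $\|T\|_1\le\sum_{n,n'}\|T(n,n')\|$ and run dominated convergence; this is more elementary and self-contained, bypassing the auxiliary weighted estimate and the two-parameter $(a,b)$ limit. For the upper sign the paper would simply conjugate by $\Pi$: chiral symmetry gives $Q_{+,a}-P_+=\Pi(Q_{-,a}-P_-)\Pi$ and $[\Pi,\Lambda]=0$, so the trace norm is unchanged. Your detour through the spectral decomposition and \cref{lem:finite window regularization} works, but is heavier than needed.
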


\begin{proof}[Proof of \cref{thm:BEC}]
The operator $\Lambda_{a}$ introduced in \cref{eq:commutation relations between J_a and Lambda} is of finite
rank. The basic identity is 
\begin{align}
\tr(\Pi\Lambda_{a}) & =  0\,,\label{eq:Chirality is traceless}
\end{align}
which follows by evaluating the trace in the position basis and by
using $\tr_{\mathbb{C}^{2N}}\Pi=0$. We
insert $\mathds{1}=P_{0,a}+P_{+,a}+P_{-,a}$ with $P_{\pm,\,a}\equiv\chi_{(0,\,\infty)}(\pm H_{a})$
and obtain 
\begin{align}
\tr(\Pi\Lambda_{a}) & = \tr(\Pi\Lambda_{a}P_{0,a})+\tr(\Pi\Lambda_{a}P_{+,a})+\tr(\Pi\Lambda_{a}P_{-,a})\,.\label{eq:partition of unity}
\end{align}
The first term tends to $\mathcal{N}^{\sharp}$ as $a\to+\infty$
by \cref{eq:finite window regularization}. The second one is 
\begin{align*}
\tr(\Pi\Lambda_{a}P_{+,a}) & =  \tr(\Pi P_{-,a}\Lambda_{a}P_{+,a}) =  \tr(\Pi P_{-,a}[\Lambda_{a},\,P_{+,a}])\\
 & =  \tr(\iota_{a}\Pi P_{-,a}[\Lambda_{a},\,P_{+,a}]\iota_{a}^{\ast}) =  \tr(\Pi\iota_{a}P_{-,a}\iota_{a}^{\ast}[\Lambda,\,\iota_{a}P_{+,a}\iota_{a}^{\ast}])\,,
\end{align*}
where we used $\Pi P_{-,a}=P_{+,a}\Pi P_{-,a}$ (see \cref{eq:chiral symmetry constraint for edge Hamiltonian,eq:If H anti-commutes with chirality so does its odd functional calculus}), $\tr_{\mathcal{H}_{a}} A=\tr_{\mathcal{H}}(\iota_{a}A\iota_{a}^{\ast})$
for any operator $A$ on $\mathcal{H}_{a}$, as well as \cref{eq:partial isometry condition,eq:commutation relations between J_a and Lambda}. We next
use \cref{eq:Edge projection strongly converges to bulk projection,eq:commutator of projections with Lambda converges to zero}
together with the implication 
\begin{align}
X_{a}\stackrel{s}{\longrightarrow}X,\,Y_{a}\stackrel{t}{\longrightarrow}Y\quad\Longrightarrow\quad X_{a}Y_{a}\stackrel{t}{\longrightarrow}XY\,,\label{eq:trace norm convergence result from elbau graf 2002}
\end{align}
(see e.g. \cite{Elbau_Graf_2002}, Eq. (56)) to conclude 
\begin{align*}
\lim_{a\to+\infty}\tr(\Pi\Lambda_{a}P_{+,a}) & =  \tr(\Pi P_-[\Lambda,\,P_{+}])\,.
\end{align*}
We likewise have for the third term in \cref{eq:partition of unity}
\begin{align*}
\lim_{a\to+\infty}\tr(\Pi\Lambda_{a}P_{-,a}) & =  \tr(\Pi P_{+}[\Lambda,\,P_-])
\end{align*}
and thus find from \cref{eq:Chirality is traceless,eq:reformulation of bulk invariant}
that 
\begin{align*}
0 & =  \mathcal{N}^{\sharp}-\mathcal{N}\,.
\end{align*}
\end{proof}

In comparing the proofs of the cases of spectral and mobility gaps the following may be noted: While in the spectral gap case bulk and edge may be related at any finite $a$, in the mobility gap case the relation emerges at $a\to+\infty$, and this is made possible by \cref{lem:finite window regularization}.

\section{\label{sec:The Appendix}Appendix}

\subsection{Proofs of lemmas for the duality}

\begin{lem}
We have for $T$ operating on $\ell^{2}(\mathbb{Z})$
\begin{align}
\|T\|_{1} & \leq  \sum_{n,\,n'}\left|T(n,\,n')\right|\,.\label{eq:bound on trace class norm}
\end{align}
The bound is passed down to $\ell^2(\mathbb{Z})\otimes\mathbb{C}^{2N}$ provided $|\cdot|$ is interpreted as the trace norm of operators on the second factor.
\begin{proof}
Let $\mathcal{H}$ be a Hilbert space and $\{\varphi_{n}\}_{n}$
an orthonormal basis. Then 
\begin{align*}
\|T\|_{1}  \equiv  \sum_{n'}\left\langle \varphi_{n'},\,\left|T\right|\varphi_{n'}\right\rangle 
  \leq  \sum_{n'}\|\left|T\right|\varphi_{n'}\|
  =  \sum_{n'}\|T\varphi_{n'}\|\leq\sum_{n,n'}|\langle\varphi_{n},T\varphi_{n'}\rangle|\,,
\end{align*}
where we used $\|\psi\|\leq\sum_{n}|\langle\varphi_n,\psi\rangle|$  in the last step.

\end{proof}
\end{lem}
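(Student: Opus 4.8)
The plan is to reduce everything to the elementary fact that a rank-one operator has trace norm equal to the product of the norms of its two defining vectors. For the scalar statement \eqref{eq:bound on trace class norm} I would first write $\|T\|_{1}=\tr|T|=\sum_{n'}\langle\varphi_{n'},|T|\varphi_{n'}\rangle$ in any orthonormal basis $\{\varphi_{n}\}$, bound each diagonal term by Cauchy--Schwarz, $\langle\varphi_{n'},|T|\varphi_{n'}\rangle\le\||T|\varphi_{n'}\|$, and use $\||T|\psi\|^{2}=\langle\psi,T^{\ast}T\psi\rangle=\|T\psi\|^{2}$ to trade $|T|$ for $T$. Expanding the column $T\varphi_{n'}$ in the basis and invoking $\|\psi\|\le\sum_{n}|\langle\varphi_{n},\psi\rangle|$ (the $\ell^{1}$ norm dominating the $\ell^{2}$ norm) then gives $\|T\|_{1}\le\sum_{n,n'}|\langle\varphi_{n},T\varphi_{n'}\rangle|$, as claimed.

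For the second assertion, concerning operators on $\ell^{2}(\mathbb{Z})\otimes\mathbb{C}^{2N}$ with $|\cdot|$ read as the fiber trace norm, I would \emph{not} simply apply the scalar bound in a product basis $\{\delta_{n}\otimes e_{\alpha}\}$: that route only yields the entrywise estimate $\|T\|_{1}\le\sum_{n,n'}\sum_{\alpha,\beta}|T(n,n')_{\alpha\beta}|$, and since $\|M\|_{1}\le\sum_{\alpha,\beta}|M_{\alpha\beta}|$ is in general strict for a matrix $M$, this overshoots the target right-hand side. Instead I would decompose $T$ block-wise. Letting $J_{n}\colon\mathbb{C}^{2N}\to\ell^{2}(\mathbb{Z})\otimes\mathbb{C}^{2N}$, $v\mapsto\delta_{n}\otimes v$, be the canonical isometry and $P_{n}:=J_{n}J_{n}^{\ast}$ the projection onto the $n$-th site, one has $T=\sum_{n,n'}P_{n}TP_{n'}$ with $P_{n}TP_{n'}=J_{n}\,T(n,n')\,J_{n'}^{\ast}$. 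The triangle inequality for the trace norm gives $\|T\|_{1}\le\sum_{n,n'}\|P_{n}TP_{n'}\|_{1}$, and because $J_{n},J_{n'}$ are isometries the singular values of $J_{n}T(n,n')J_{n'}^{\ast}$ coincide with those of the block $T(n,n')$, so that $\|P_{n}TP_{n'}\|_{1}=\|T(n,n')\|_{1}$ is exactly the fiber trace norm. This delivers the stated inequality.

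The one point requiring genuine care is the passage through the infinite block sum, which I would handle as follows: if $\sum_{n,n'}\|T(n,n')\|_{1}$ is infinite the inequality is vacuous, while if it is finite the series $\sum_{n,n'}P_{n}TP_{n'}$ converges absolutely in trace norm, hence to some trace-class operator; since the same partial sums converge strongly to $T$, the trace-norm limit must be $T$, and the triangle inequality survives the limit. Thus the only substantive step beyond bookkeeping is recognizing that the fiber-trace-norm version is \emph{strictly stronger} than the naive product-basis estimate and must be obtained through the isometric invariance $\|P_{n}TP_{n'}\|_{1}=\|T(n,n')\|_{1}$ rather than through any single choice of orthonormal basis.
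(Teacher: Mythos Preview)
Your scalar argument is line-for-line the same as the paper's: both write $\|T\|_1=\sum_{n'}\langle\varphi_{n'},|T|\varphi_{n'}\rangle$, bound by $\||T|\varphi_{n'}\|=\|T\varphi_{n'}\|$, and finish with $\|\psi\|\le\sum_n|\langle\varphi_n,\psi\rangle|$.

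For the tensor-product statement you actually do more than the paper. The paper's proof treats only the scalar case and leaves the passage to $\ell^2(\mathbb{Z})\otimes\mathbb{C}^{2N}$ implicit. You correctly observe that naively reapplying the scalar bound in a product basis $\{\delta_n\otimes e_\alpha\}$ yields $\sum_{n,n'}\sum_{\alpha,\beta}|T(n,n')_{\alpha\beta}|$, which can strictly exceed $\sum_{n,n'}\|T(n,n')\|_1$ and hence does not prove the stated inequality. Your block decomposition $T=\sum_{n,n'}J_nT(n,n')J_{n'}^\ast$ together with the isometric invariance $\|J_nT(n,n')J_{n'}^\ast\|_1=\|T(n,n')\|_1$ and the convergence bookkeeping is the clean way to get the fiber trace norm on the right-hand side, and it closes a gap that the paper's write-up leaves open.
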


\begin{proof}[Proof of \cref{lem:=commutator of L and P is trace class}]
We first prove the trace class property. The operator $T=[\Lambda,\,P]$ has kernel 
\begin{align*}
T(n,\,n') & =  (\Lambda(n)-\Lambda(n'))P(n,\,n')\,.
\end{align*}
For large $\left|n\right|$ we have $\Lambda(n')=\Lambda(n)$ unless
$\left|n-n'\right|>\left|n\right|/2$. For such $n$ we have
\begin{align*}
\left|\Lambda(n)-\Lambda(n')\right|e^{-\mu\left|n-n'\right|}  \leq  2\|\Lambda\|_{\infty}e^{-\mu\left|n\right|/2}\leq  C(1+\left|n\right|)^{-\nu}
\end{align*}
for any $\mu,\,\nu>0$, suitable $C>0$ and all $n'$; thus 
\begin{align}
\left|\Lambda(n)-\Lambda(n')\right| & \leq  C(1+\left|n\right|)^{-\nu}e^{\mu\left|n-n'\right|}\,.\label{eq:equation which needs its constant to be adjusted}
\end{align}
For the finitely many remaining $n$ we have 
\[
\left|\Lambda(n)-\Lambda(n')\right|\leq2\|\Lambda\|_{\infty}\leq C(1+\left|n\right|)^{-\nu}
\]
by adjusting the constant $C$, and thus \cref{eq:equation which needs its constant to be adjusted}
as well. The bound $\|T\|_{1}<\infty$ now follows from \cref{assu:exp_decay_of_Fermi_projection} by \cref{eq:bound on trace class norm}. 

For the second statement of the lemma, we have by $\Sigma=P_+-P_-$ 
\begin{align}
2\mathcal{N} &= \tr\Pi P_{+}[\Lambda,\Sigma] - \tr \Pi P_- [\Lambda,\Sigma]\,.
\label{eq:beginning of reformulation of N}
\end{align}
The first term is the sum of two equal ones,
\begin{align*}
\tr\Pi P_{+}[\Lambda,\Sigma] = \tr\Pi P_{+}[\Lambda,P_{+}] - \tr\Pi P_{+}[\Lambda,P_-]  = -2\tr\Pi P_{+}[\Lambda,P_-]\,;
\end{align*}
indeed, by $P_{+}=(P_{+})^2$, and \cref{eq:rephrasing of the assumption of chiral symmetry} we have
\begin{align*}
\tr\Pi P_{+}[\Lambda,P_\pm] = \tr\Pi P_{+}[\Lambda,P_\pm] P_- 
= \mp\tr\Pi P_{+} \Lambda P_-\,.
\end{align*}
Likewise can be said about the last term in \cref{eq:beginning of reformulation of N}: \begin{align*}\tr\Pi P_-[\Lambda,\Sigma]=2\tr\Pi P_-[\Lambda,P_+]\,.\end{align*} We obtain \cref{eq:reformulation of bulk invariant}.
\end{proof}

\begin{proof}[Proof of \cref{lem:Edge projection strongly converges to bulk projection}]
We first prove \cref{eq:Edge projection strongly converges to bulk projection} and claim 
\begin{align*}
\iota_{a}f(H_{a})\iota_{a}^{\ast} & =  f(\iota_{a}H_{a}\iota_{a}^{\ast})+f\left(0\right)(\mathds{1}-\iota_{a}\iota_{a}^{\ast})
\end{align*}
for any (Borel) function $f$. In fact, let us decompose $\ell^{2}(\mathbb{Z})=\ell^{2}(\mathbb{Z}_{a})\oplus\ell^{2}(\tilde{\mathbb{Z}}_{a})$,
where $\tilde{\mathbb{Z}}_{a}=\mathbb{Z}\setminus\mathbb{Z}_a$,
as well as any descendant space such as $\mathcal{H}$. The isometries
$\iota_{a}$ and $\tilde{\iota}_{a}$ (similarly defined) provide
a partition of unity, $\mathds{1}=\iota_{a}\iota_{a}^{\ast}+\tilde{\iota}_{a}\tilde{\iota}_{a}^{\ast}$, and a block decomposition of 
\begin{align*}
\iota_{a}H_{a}\iota_{a}^{\ast}  =  \iota_{a}H_{a}\iota_{a}^{\ast}+\tilde{\iota}_{a}0\tilde{\iota}_{a}^{\ast}
  \equiv  H_{a}\oplus0\,.
\end{align*}
Thus, by the functional calculus, 
\begin{align}
f(\iota_{a}H_{a}\iota_{a}^{\ast}) & =  \iota_{a}f(H_{a})\iota_{a}^{\ast}+\tilde{\iota}_{a}f(0)\tilde{\iota}_{a}^{\ast}\,,
\label{eq:functional calculus on truncation of Hamiltonian}
\end{align}
as claimed.

For uniformly bounded operators, like $\iota_{a}H_{a}\iota_{a}^{\ast}$
and $H$, strong resolvent convergence is equivalent to strong convergence
(see \cite{Reed_Simon_FA_0125850506}, Problem VIII.28). The latter,
\[
\iota_{a}H_{a}\iota_{a}^{\ast}-H\stackrel{s}{\longrightarrow}0,\qquad(a\to+\infty)
\]
is evident, because the LHS vanishes for large but finite
$a$, when applied to any state $\psi\in\mathcal{H}$ from the dense
subspace $\{\supp\,\psi\subseteq\mathbb{Z}\text{ is bounded}\}$.
Finally we specialize to $f=\chi_{(-\infty,0)}$. By (\cite{Reed_Simon_FA_0125850506}, Theorem VIII.24 (b)) and \cref{assu:zero_is_not_an_eigenvale_of_the_Hamiltonian} the strong resolvent convergence implies $f(\iota_a H_a \iota_a^\ast)-f(H)\stackrel{s}{\to}0,\,(a\to+\infty)$. The limit \cref{eq:Edge projection strongly converges to bulk projection} now follows from \cref{eq:functional calculus on truncation of Hamiltonian} by $f(0)=0$.
\end{proof}

\begin{proof}[Proof of \cref{eq:commutator of projections with Lambda converges to zero}]
We write $D_{a}:=\iota_{a}P_{-,a}\iota_{a}^{\ast}-P_-$ for brevity. As
shown in (\cite{EGS_2005}, Eq. (3.20)), \cref{assu:exp_decay_of_Fermi_projection}
implies 
\begin{align*}
\|e^{-\mu n}e^{-\varepsilon\left|n\right|}P_-e^{\mu n}\| & \leq  C_{\varepsilon},\qquad(\varepsilon>0)\,,
\end{align*}
where $g(n)$ denotes the multiplication operator by the namesake function. The same holds true by the same assumption
for $\iota_{a}P_{-,a}\iota_{a}^{\ast}$ instead of $P_-$, and thus for
$D_{a}$ as well. The same estimate holds for $\mu$ replaced by $-\mu$. 

We pick a switch function $\Lambda$ with compactly supported variation and denote by $\Lambda^b(n)=\Lambda(n-b)$ its translate by $b\in\mathbb{N}$. We note that $\Lambda-\Lambda^b$ is of finite rank and that for fixed $\varepsilon\in\left(0,\,\mu\right)$ we have 
\begin{align*}
\|(\mathds{1}-\Lambda)e^{\mu n}e^{\varepsilon\left|n\right|}\| & \leq  C\,,\\
\|e^{-\mu n}\Lambda^b\|_{1} & \leq  Ce^{-\mu b}\,.
\end{align*}
The LHS of \cref{eq:commutator of projections with Lambda converges to zero} is 
\begin{align}
[D_{a},\,\Lambda] & =  \left(\mathds{1}-\Lambda\right)D_{a}\Lambda-\Lambda D_{a}\left(\mathds{1}-\Lambda\right)\label{eq:decomposition of commutator with switch to two summands-1}
\end{align}
and we claim that in the limit $a\to+\infty$ each term vanishes separately in trace norm. Indeed,
\begin{align*}
(\mathds{1}-\Lambda)D_{a}\Lambda & =  (\mathds{1}-\Lambda)D_{a}\Lambda^{b}+(\mathds{1}-\Lambda)D_{a}(\Lambda-\Lambda^{b})\,,\\
(\mathds{1}-\Lambda)D_{a}\Lambda^{b} & =  (\mathds{1}-\Lambda)e^{\mu n}e^{\varepsilon\left|n\right|} \cdot e^{-\mu n}e^{-\varepsilon\left|n\right|}D_{a}e^{\mu n}\cdot  e^{-\mu n}\Lambda^{b}
\end{align*}
Thus $\|(\mathds{1}-\Lambda)D_{a}\Lambda^{b}\|_{1}$ can be made
arbitrarily small, uniformly in $a$, by first picking $b$ large.
Then $\|(\mathds{1}-\Lambda)D_{a}(\Lambda-\Lambda^{b})\|_{1}$ will
be small for large $a$ by \cref{eq:Edge projection strongly converges to bulk projection}
(see \cref{eq:trace norm convergence result from elbau graf 2002}). The other term on the RHS of \cref{eq:decomposition of commutator with switch to two summands-1}
is dealt with similarly.
\end{proof}

\subsection{More general boundary conditions\label{subsec:General B.C.}}

In this section we generalize \cref{thm:BEC} to (largely) arbitrary
boundary conditions. In the case of a spectral gap (see \cref{sec:The-Spectral-Gap-Case}) we implement them by relaxing \cref{eq:relation between S_a and S} to \begin{align}S_a:=\iota_a^\ast S \iota_a+S_{BC}\,,\label{eq:custom BC for S_a}\end{align} where $S_{BC}$ is any compact operator. Since the Fredholm index is invariant under compact perturbations, the change does not affect the edge index \cref{eq:Edge index is a Fredholm index} and \cref{thm:BEC} remains true.

In the mobility gap regime and in the context of the model with nearest neighbor hopping \cref{eq:chiral Hamiltonian,eq:action of S in our model,eq:action of S* in our model} more general boundary conditions are obtained by allowing $S_{BC}$ to affect only sites $a$ and $a-1$; we thus allow the hopping matrices $A_n, B_n$ of the boundary $n=a$ to become singular, whereas they remain regular for $n\leq a-1$. The edge Hamiltonian \cref{eq:chiral edge Hamiltonian} remains defined with $S_a$ as in \cref{eq:relation between S_a and S}. Thus $S_a$ acts as in \cref{eq:action of S in our model} for $n\leq a$; likewise does $S_a^\ast$ as in \cref{eq:action of S* in our model}, except for $n=a$ where $(S_a^\ast\psi^-)_a=B_a^\ast\psi_a^-$.
\begin{prop} 
The edge index $\mathcal{N}_a$ is the same for all boundary matrices ($A_a$, $B_a$). In particular it is the same as in \cref{eq:reformulation of edge invariant}.
\begin{example}The case of regular $A_a, B_a$ corresponds to the edge Hamiltonian discussed so far. In relation to \cref{fig:Chiral zigzag Model} it amounts to breaking the thin bond between dimers $a$ and $a+1$. To set $A_a=0$ amounts to further remove one more dimer; to set instead $B_a=0$ to break the thick bond of the last dimer.
\end{example}
\begin{proof}
By \cref{eq:Fredholm index of S_a} we have $\mathcal{N}_a=\dim V^+ - \dim V^-$ with
\begin{align}
\mathclap{V^\pm = \{\psi^\pm:\mathbb{Z}\to\mathbb{C}^N|\psi^\pm\text{ is }\ell^2\text{ at }n\to-\infty\text{ and satisfies }\cref{eq:equation for ker S_a}\text{, resp. }\cref{eq:equation for ker S_a part 1,eq:equation for ker S_a part 2}\}\,,}\notag\\
A_{n+1}\psi_n^++B_{n+1}\psi_{n+1}^+&=0\,,\qquad(n\leq a-1)\label{eq:equation for ker S_a}\\
A_{n+1}^\ast\psi_{n+1}^-+B_n^\ast\psi_n^-&=0\,,\qquad(n\leq a-1)\label{eq:equation for ker S_a part 1}\\
B_a^\ast\psi_a^-&=0\,.\label{eq:equation for ker S_a part 2}
\end{align}
(The first equation is $(S\psi^+)_n=0$ for $n\leq a$ after shifting the index by one.) 

Introducing $\varphi_n^- = B_n^\ast \psi_n^-$, the eqs. \cref{eq:equation for ker S_a,eq:equation for ker S_a part 1} are solved iteratively to the left for $n\leq a-2$ by
\begin{align*}
\psi_n^+ = T_{n+1}\psi_{n+1}^+\,,\quad\varphi_n^-=T_{n+1}^\circ \varphi_{n+1}^-
\end{align*}
with $T_n=-A_n^{-1}B_n$ and $M^\circ=(M^\ast)^{-1}$. In particular,
\begin{align*}
\langle\psi_n^+,\,\varphi_n^-\rangle = \langle T_{n+1}\psi_{n+1}^+,\,T_{n+1}^\circ\varphi_{n+1}^-\rangle = \langle\psi_{n+1}^+,\,\varphi_{n+1}^-\rangle\,,
\end{align*}
which means that the LHS is constant in $n\leq a-1$. Actually we have 
\begin{align}
\langle\psi_n^+,\,\varphi_n^-\rangle = 0\,,\quad(n\leq a-1)\label{eq:orthogonality condition between psi_n_plus and b_n_ast psi_n_minus}
\end{align}
because of the $\ell^2$-condition, and not by resorting to \cref{eq:equation for ker S_a part 2}. To sum up: Since $A_n,B_n\in GL_N(\mathbb{C})\,,\,(n\leq a-1)$ the solutions of \cref{eq:equation for ker S_a,eq:equation for ker S_a part 1} with $n\leq a-2$ are bijectively determined by $\psi_{a-1}^+,\varphi_{a-1}^-\in\mathbb{C}^N$; among them, those that are $\ell^2$ correspond to subspaces (independent of $A_a, B_a$) of complementary dimensions. This follows from \cref{eq:Oseledet spaces of the circ sequence} with $\chi=0$.

The claim now follows by applying the following lemma to \cref{eq:equation for ker S_a,eq:equation for ker S_a part 1,eq:equation for ker S_a part 2} for $n=a-1$ by identifying $A_a=A$, $B_a=B$, $\psi_{a-1}^+=\psi^+$, $\psi_a^+=\tilde{\psi}^+$, $B_{a-1}^\ast\psi_{a-1}^-=\psi^-$, $\psi_a^-=\tilde{\psi}^-$, and by using \cref{eq:orthogonality condition between psi_n_plus and b_n_ast psi_n_minus} for $n=a-1$.
\end{proof}
\begin{lem}
Let an orthogonal decomposition $\mathbb{C}^N=V^+\oplus V^-$ and matrices $A,B\in \mathrm{Mat}_N(\mathbb{C})$ be given. We consider the set of equations
\begin{align}
A\psi^++B\tilde{\psi}^+&=0\label{eq:first equation for S_a lemma}\,,\\
A^\ast\tilde{\psi}^-+\psi^-&=0\,,\quad B^\ast\tilde{\psi}^-=0\label{eq:second equation for S_a_star lemma}
\end{align}
in the the unknowns $\psi^\pm\in V^\pm$, $\tilde{\psi}^\pm\in\mathbb{C}^N$. Then
\begin{align}
\dim\{(\psi^+,\tilde{\psi}^+)|\cref{eq:first equation for S_a lemma}\}-\dim\{(\psi^-,\tilde{\psi}^-)|\cref{eq:second equation for S_a_star lemma}\} = \dim V^+\,.\label{eq:final equation for conclusion of custom BC}
\end{align}
\begin{proof}
Let $P$ be the orthogonal projection onto $V^+$, whence $\dim \im P = \dim V^+$. Then the dimensions on the LHS of \cref{eq:final equation for conclusion of custom BC} are unaffected upon supplementing \cref{eq:first equation for S_a lemma,eq:second equation for S_a_star lemma} with $P \psi^+ = \psi^+$ and $P \psi^-=0$ respectively, while solving for $(\psi^\pm,\tilde{\psi}^\pm)\in\mathbb{C}^N\oplus\mathbb{C}^N=\mathbb{C}^{2N}$. We are then left computing $$I:=\dim\ker T_+ - \dim \ker T_-$$ with $T_+:\mathbb{C}^{2N}\to\mathbb{C}^{2N}$, $T_-:\mathbb{C}^{2N}\to\mathbb{C}^{3N}$ given by $$ T_+ = \begin{pmatrix} \mathds{1}-P & 0 \\ A & B\end{pmatrix}\,,\quad T_-=\begin{pmatrix}P & 0 \\ \mathds{1} & A^\ast \\ 0 & B^\ast\end{pmatrix}\,.$$

Using that $$ T_-^\ast = \begin{pmatrix}P & \mathds{1} & 0 \\ 0 & A & B\end{pmatrix} \,,\quad\begin{pmatrix}P & \mathds{1}-P & 0 \\ 0 & A & B\end{pmatrix}$$ have the same range, we have $$ \dim \im T_-^\ast = \dim V^+ + \dim \im T_+\,;$$ using also that $\dim \ker T_- = 2N - \dim \im T_-^\ast$  we find $$ I = \dim \ker T_+ + \dim \im T_+ - 2N + \dim V^+ = \dim V^+ \, .$$
\end{proof}
\end{lem}
\end{prop}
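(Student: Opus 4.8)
The plan is to reduce $\mathcal{N}_a = \dim\ker S_a - \dim\ker S_a^\ast$, as provided by \cref{eq:Fredholm index of S_a}, to a finite-dimensional problem supported at the single boundary pair of sites $\{a-1,a\}$, which is the only place where the modified matrices $(A_a,B_a)$ intervene. First I would record, much as in \cref{lem:reformulation of edge index via non-normalizable bulk states}, that $\ker S_a = V^+$ and $\ker S_a^\ast = V^-$, where $V^\pm$ are the spaces of solutions of the finite-difference equations \cref{eq:equation for ker S_a} and \cref{eq:equation for ker S_a part 1,eq:equation for ker S_a part 2} that are $\ell^2$ at $n\to-\infty$. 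Unlike the regular case, the equation for $\psi^-$ now carries the boundary constraint \cref{eq:equation for ker S_a part 2}, $B_a^\ast\psi_a^- = 0$, and $(A_a,B_a)$ may be singular while $A_n,B_n\in GL_N(\mathbb{C})$ persists for $n\le a-1$.

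Next I would peel off the bulk recursion. Because the matrices of index $\le a-1$ are invertible, the tails $(\psi_n^+)_{n\le a-1}$ and $(\varphi_n^-)_{n\le a-1}$, with $\varphi_n^- := B_n^\ast\psi_n^-$, are generated for $n\le a-2$ from the single data $\psi_{a-1}^+$ and $\varphi_{a-1}^-$ by the transfer matrices $T_n$ and $T_n^\circ$. Imposing $\ell^2$-decay at $-\infty$ confines these data to the Lyapunov subspaces $V_0$ and $\tilde V_0$ of exponent $\le 0$ associated with the two sequences, cf. \cref{eq:filtration of C^N by the Lyapunov spectrum}. Under \cref{assu:zero is not in the Lyapunov spectrum} the value $\chi=0$ is admissible in \cref{eq:Oseledet spaces of the circ sequence}, which gives $\tilde V_0 = V_0^\perp$: the two admissible spaces are orthogonal complements in $\mathbb{C}^N$, of complementary dimensions $d := \#\{i\mid \chi_i<0\}$ and $N-d$, both visibly independent of $(A_a,B_a)$. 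I would also note that the pairing $\langle\psi_n^+,\varphi_n^-\rangle$ is conserved along the recursion and hence vanishes for admissible tails, which is exactly \cref{eq:orthogonality condition between psi_n_plus and b_n_ast psi_n_minus} and the concrete incarnation of $V_0\perp\tilde V_0$.

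It then remains to feed the surviving boundary equations, namely \cref{eq:equation for ker S_a} at $n=a-1$ (reading $A_a\psi_{a-1}^+ + B_a\psi_a^+ = 0$) and \cref{eq:equation for ker S_a part 1,eq:equation for ker S_a part 2} at $n=a-1$ (reading $A_a^\ast\psi_a^- + \varphi_{a-1}^- = 0$ and $B_a^\ast\psi_a^- = 0$), into the linear-algebra lemma stated below. Under the identifications $A=A_a$, $B=B_a$, $\psi^+=\psi_{a-1}^+$, $\tilde\psi^+=\psi_a^+$, $\psi^-=\varphi_{a-1}^-$, $\tilde\psi^-=\psi_a^-$, and with the orthogonal decomposition $\mathbb{C}^N = V_0\oplus V_0^\perp$ playing the role of $V^+\oplus V^-$, the spaces $V^\pm$ are in bijection with the solution sets of \cref{eq:first equation for S_a lemma} and \cref{eq:second equation for S_a_star lemma}. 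The lemma's identity \cref{eq:final equation for conclusion of custom BC} then yields $\mathcal{N}_a = \dim V^+ - \dim V^- = \dim V_0 = d$. Since $d$ does not involve $(A_a,B_a)$, the edge index is the same for all boundary matrices; taking $(A_a,B_a)$ regular recovers \cref{eq:reformulation of edge invariant}, giving the ``in particular'' clause.

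The hard part will be the clean identification of $\ell^2$-decay at $-\infty$ with membership in the Lyapunov subspace, both for $\psi^+$ feeding $V_0$ and for the circ-sequence $\varphi^-$ feeding $\tilde V_0$. This hinges on \cref{assu:zero is not in the Lyapunov spectrum} to exclude a marginal ($\chi=0$) direction, on the equivalence of $\ell^2$-summability of $\psi^-$ and of $\varphi^- = B^\ast\psi^-$ (which uses $B_n\in GL_N(\mathbb{C})$ together with suitable uniform bounds), and on \cref{eq:Oseledet spaces of the circ sequence} to convert the dual filtration into the orthogonal complement, so that the lemma's hypothesis of an orthogonal decomposition $\mathbb{C}^N = V^+\oplus V^-$ is genuinely met. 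Once this correspondence is secured, the boundary-site bookkeeping and the appeal to the lemma are routine.
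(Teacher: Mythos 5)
Your proposal is correct and takes essentially the same route as the paper: the same reduction of $\mathcal{N}_a$ via \cref{eq:Fredholm index of S_a} to $\ell^2$-at-$-\infty$ solution spaces, the same transfer-matrix peeling with $\varphi_n^-=B_n^\ast\psi_n^-$, the conserved pairing yielding \cref{eq:orthogonality condition between psi_n_plus and b_n_ast psi_n_minus}, the Oseledets duality \cref{eq:Oseledet spaces of the circ sequence} at $\chi=0$, and the identical identifications feeding the boundary lemma at $n=a-1$. The only difference is presentational: you spell out the reliance on \cref{assu:zero is not in the Lyapunov spectrum} and the $\ell^2$--Lyapunov-subspace identification, which the paper invokes tersely through the proviso of \cref{eq:Oseledet spaces of the circ sequence}.
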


\begingroup
\let\itshape\upshape
 \printbibliography
 \endgroup
\end{document}